	\titleformat{\section}{}{}{0em}{\Large\scshape\thesection.~}
	\titleformat{\subsection}{}{}{0em}{\large\scshape\thesubsection.~}
	\titleformat{\subsubsection}{}{}{0em}{\scshape\thesubsubsection.~}
	\newtheoremstyle{mystyle}%                % Name
  {}%                                     % Space above
  {}%                                     % Space below
  {\itshape}%                                     % Body font
  {}%                                     % Indent amount
  {\normalsize \scshape}%                            % Theorem head font
  {.}%                                    % Punctuation after theorem head
  { }%                                    % Space after theorem head, ' ', or \newline
  {\thmname{#1}\thmnumber{ #2}\thmnote{ (#3)}}%                                     % Theorem head spec (can be left empty, meaning `normal')
  \theoremstyle{mystyle}
		\newcommand{\E}{\mathbb{E}}
		\newcommand{\Q}{\mathbb{Q}}
		\renewcommand{\P}{\mathbb{P}}
		\newcommand{\R}{\mathbb{R}}
		\newcommand{\D}{\mathcal{D}}
		\newcommand{\X}{\mathbf{X}}
		\newcommand{\Y}{\mathbf{Y}}
		\newcommand{\EE}{\mathcal{E}}
		\newcommand{\U}{\mathcal{U}}
		\newcommand{\N}{\mathcal{N}}
		\newcommand{\brae}[1]{\left[ #1 \right]}
		\newcommand{\bra}[1]{\left( #1 \right)}
		\newcommand{\norm}[1]{\left\| #1 \right\| }
		\newcommand{\abs}[1]{\left| #1 \right| }
		\newcommand{\eps}{\varepsilon}
		\newcommand{\titname}[1]{titlename}
		\newtheorem{theorem}{Theorem}[section]
		\newtheorem{definition}[theorem]{Definition}
		\newtheorem{notation}[theorem]{Notation}
		\newtheorem{example}[theorem]{Example}
		\newtheorem{remark}[theorem]{Remark}
\tikzset{
    right angle quadrant/.code={
        \pgfmathsetmacro\quadranta{{1,1,-1,-1}[#1-1]}     % Arrays for selecting quadrant
        \pgfmathsetmacro\quadrantb{{1,-1,-1,1}[#1-1]}},
    right angle quadrant=1, % Make sure it is set, even if not called explicitly
    right angle length/.code={\def\rightanglelength{#1}},   % Length of symbol
    right angle length=2ex, % Make sure it is set...
    right angle symbol/.style n args={3}{
        insert path={
            let \p0 = ($(#1)!(#3)!(#2)$) in     % Intersection
                let \p1 = ($(\p0)!\quadranta*\rightanglelength!(#3)$), % Point on base line
                \p2 = ($(\p0)!\quadrantb*\rightanglelength!(#2)$) in % Point on perpendicular line
                let \p3 = ($(\p1)+(\p2)-(\p0)$) in  % Corner point of symbol
            (\p1) -- (\p3) -- (\p2)
        }
    }
}
\tikzset{%
  add/.style args={#1 and #2}{to path={%
 ($(\tikztostart)!-#1!(\tikztotarget)$)--($(\tikztotarget)!-#2!(\tikztostart)$)%
  \tikztonodes}}
}
	\renewcommand{\dagger}{**}
	\renewcommand{\ddagger}{{**}*}
    \renewcommand*{\@fnsymbol}[1]{\ensuremath{\ifcase#1\or *\or \dagger\or \ddagger\or
       \mathsection\or \mathparagraph\or \|\or **\or \dagger\dagger
       \or \ddagger\ddagger \else\@ctrerr\fi}}
\title{Noise Fit, Estimation Error and a\\Sharpe Information Criterion}
\date{December 11, 2019}
\author[1]{Dirk Paulsen\thanks{John Street Capital LLP, London, United Kingdom, Email:
dirk.h.paulsen@googlemail.com}}
\author[2]{Jakob S\"ohl\thanks{Delft Institute of Applied Mathematics, TU Delft, The Netherlands, Email: j.soehl@tudelft.nl}}
\affil[1]{John Street Capital}
\affil[2]{TU Delft}
\begin{document}
\maketitle

\pagenumbering{arabic}
\begin{abstract}
When the in-sample Sharpe ratio is obtained by optimizing over a $k$-dimensional parameter space, it is a biased estimator for what can be expected on unseen data (out-of-sample). We derive (1) an unbiased estimator adjusting for both sources of bias: noise fit and estimation error. We then show (2) how to use the adjusted Sharpe ratio as model selection criterion analogously to the Akaike Information Criterion (AIC). Selecting a model with the highest adjusted Sharpe ratio selects the model with the highest estimated out-of-sample Sharpe ratio in the same way as selection by AIC does for the log-likelihood as measure of fit.\\
\\
\textbf{Keywords:} Model Selection, Sharpe Ratio, Akaike Information Criterion, AIC, Backtesting, Noise Fit, Overfit, Estimation Error, Sharpe Ratio Information Criterion, SRIC
\end{abstract}

\newpage

% Create the Table of Content
%	\tableofcontents
%	\newpage
%\footnotesize
% Load the sections
	\section{Introduction}

A convenient measure  for return predictability is the \textit{Sharpe ratio}. The Sharpe ratio is the ratio of annualized mean (excess) returns over return volatility. It is a useful statistic as it summarizes the first two moments of the return distribution, is invariant under leverage, and is therefore, to a second order approximation, the relevant metric regardless of the investors risk-return preferences.

In many situations, the Sharpe ratio is maximized over a set of parameters on an in-sample data set. Be it weights in a portfolio of assets (\cite{markowitz1952portfolio}), exposure to (risk) factors, predictor variables in a regression or other parameters influencing an investment strategy like the time-horizon of a trend model.

The thus optimized  (in-sample) Sharpe ratio overestimates the Sharpe ratio that can be  expected on unseen data (out-of-sample). The reason is twofold. First, the in-sample Sharpe ratio makes use of the in-sample data set twice. A first time to estimate the optimal parameter and a second time to estimate the resulting Sharpe ratio. As the estimated parameter will be tuned towards the noise of the in-sample data, calculating the in-sample Sharpe ratio on the same data set will overestimate the  Sharpe ratio of the true parameter (noise fit). Second,   the estimated parameter deviates from the true parameter. Thus the Sharpe ratio at the estimated parameter will be smaller than at the true parameter on unseen data (estimation error). For a more formal definition of noise fit and estimation error we refer to the error decomposition in equation (\ref{eq_decomposition}).
 
As illustration consider a linear regression model. Suppose we are interested in the question to what degree returns of the stocks in the S\&P500 Index are predictable by characteristics like the price-to-earnings ratio and the dividend yield. In this case we would model the return $r_t^i$ of stock $i=1,\ldots,500$ as
$$r_t^i = \mu_t^i + \eps_t^i, \quad \text{with } \mu_t^i = \sum\limits_{j=1}^M x_t^{i,j}\theta_j$$
 where $\eps_t^i$ is noise of zero mean and the expected stock return $\mu_t^i$ is linear in the $M$ characteristics $x_t^{i,j}, j=1,\ldots,M$ with unknown coefficients $\theta$. We can get an estimate $\hat \theta$ by generalized least squares or, yielding the same, by maximizing the Sharpe ratio of a portfolio using the appropriately scaled return predictions as weights.  
If we estimate the parameters on the in-sample data set and measure predictability on the same data, say by $R^2$ or the Sharpe ratio, we overestimate the true predictability (noise fit). To wit, even if no predictability exists, noise fit will guarantee that we will still observe a positive in-sample $R^2$.

Second, if we use the estimated coefficients $\hat \theta$ and apply them on an independent data set, we underestimate the true predictability as estimation error in the estimated coefficients diminishes their predictive power. The estimated model has lower predictive power than
the unknown true model (estimation error).% as estimation error systematically reduces the performance of the latter. 

Standard statistical tests deal with inference for true models. Standard tests ask whether $\hat \theta$ allows to conclude that the predictability of the unknown true $\theta^*$ is larger than a threshold at a given confidence level.  They thus correct in-sample estimates for noise fit\footnote{For instance, in the F-test for a linear regression this correction is implicit in the non-zero mean of the F-distribution by which the F-test takes into account that there will be positive in-sample fit even under the null hypothesis of zero predictability.}. In this paper, by contrast, we are interested in the Sharpe ratio at the estimated parameters $\hat\theta$. We therefore need to additionally correct for estimation error.  

What Sharpe ratio can a portfolio manager expect out-of-sample when applying the estimated model? What is the best estimate of the out-of-sample Sharpe ratio of a fitted model? 

We answer that question by deriving an unbiased closed form  estimator for the out-of-sample Sharpe ratio when the in-sample Sharpe ratio is obtained by fitting $k$ parameters. Our estimator corrects for both, noise fit and estimation error, and can be calculated based on observable data only. In particular, we do not assume the true Sharpe ratio to be known. This is also why the estimator can be applied as model selection criterion, our second contribution. The resulting criterion is analogous to the Akaike Information Criterion but with the Sharpe ratio as objective rather than log-likelihood (see Section \ref{section_model_selection}). We therefore call our estimator \textit{SRIC} for \textit{Sharpe Ratio Information Criterion}. %\red{To the best of our knowledge we are the first to proveide an unbiased (or unbiased to higher order than non corrected) estimator of OOS Sharpe}

The detrimental effect of estimation error on out-of-sample performance in portfolio selection (or return prediction for that matter) has long been recognized (see e.g. \cite{frankfurter1971portfolio,dickinson1974reliability,jobson1980estimation}). \cite{demiguel2009optimal} suggest estimation error is so severe that simple allocation rules that refrain from optimization at all are often superior to naive optimization. %Though newer literature suggests that the effect found in \cite{demiguel2009optimal} is mainly due to time variation in the risk exposure of their implementation of the mean-variance portfolio.
 
A variety of ideas have emerged to combat estimation error. \cite{jorion1986bayes}, among others, suggests to shrink return estimates. \cite{kan2007optimal} propose rules to combine the Markowitz portfolio with the riskless asset and the minimum variance portfolio in order to maximize an estimate of out-of-sample mean variance utility. 
\cite{kirby2012optimizing} suggest to parametrize investment strategies. While this does not eliminate estimation error completely, the problem is reduced to a potentially lower dimensional subspace and this allows to use a longer estimation horizon. To wit, the optimal weighting between an estimate of the mean-variance optimal portfolio and the minimum variance portfolio might be well estimated over $50$ years of data while the return of a particular stock might be best estimated by looking back no further than $1$ year. The parametrization approach is akin to the regression setting that is covered by this paper.  

\cite{ledoit2014nonlinear} answer the question how to shrink the covariance matrix in order to optimize the expected out-of-sample Sharpe ratio. The mean return is assumed to be known but the covariance matrix has to be estimated, which is the opposite to our set-up. 

For ordinary least square regressions the adjusted $R^2$ corrects the in-sample $R^2$ for noise-fit but not estimation error\footnote{ This can be seen as under the true model the expected value of the adjusted $R^2$ equals the true $R^2$ (assuming the true variance of the dependent variable is known and equal to the unbiased sample estimate).}. In more general settings (e.g. regression or also the setting at hand, see below), the Akaike Information Criterion (AIC) - see \citet*[around equation (2)]{akaike1974new}, \cite{akaike1998bayesian}, \cite{akaike1998information} - asymptotically adjusts for both, noise fit and estimation error, if performance is measured in terms of the log-likelihood function\footnote{Though the AIC is formulated in terms of information theory - the AIC minimizes the estimated Kullback-Leibler divergence between the selected and the true model - there is a different, in many contexts more intuitive, interpretation. By construction AIC is an estimate of the Kullback-Leibler divergence between the true and estimated model. The Kullback-Leibler divergence can be written as the difference of the expected out-of-sample log-likelihood between the true model and the fitted model. Hence choosing the minimum AIC model chooses the model with the highest estimated out-of-sample log-likelihood. 
See also \cite{stone1977asymptotic} who shows that in a regression context minimizing the AIC is asymptotically equivalent to minimizing the cross validation error (sum of squared residuals) or 
\citet*[page 61]{burnham2002model} or Section \ref{section_relation_to_aic} which proves the statement for the setting at hand. \nopagebreak
}. The model that minimizes AIC is the model with the highest estimated out-of-sample log-likelihood. Here, by contrast, we are concerned with the problem of maximizing the Sharpe ratio rather than log-likelihood.

For more general objective functions, \cite{west1996asymptotic} provides tools to calculate moments of smooth functions of out-of-sample predictions. The suggested technique is to use a Taylor expansion around the true parameters. This is not necessary in the linear case at hand. \cite{hansen2009sample} derives the joint asymptotic distributions for noise fit and estimation error in general settings as a function of other limit entities. 

In spite of this literature, there is surprisingly little research on the quantification of estimation error and noise fit or on how many parameters to maximize upon. When predicting returns one wants to know whether the prediction is good enough to promise sufficiently high out-of-sample performance or when including an additional predictor is detrimental to it.

Most similar and at the same time complementary to what we aim at in this article is \cite{siegel2007performance}. The authors derive an asymptotically unbiased estimator for the mean out-of-sample return and the out-of-sample variance of mean-variance portfolios. This is very useful when estimating the efficient frontier. For the Sharpe ratio, however, the resulting estimator is biased of order $\frac{1}{T}$, the same order as without bias adjustment (see our appendix for a derivation). In addition, while technically more involved, the authors consider the case in which the weights sum to $1$. We, by contrast, do not make this constraint as first, we want to allow for leverage and second, we are mainly interested in the case in which the parameters are not weights but regression coefficients. %This is very useful when estimating the efficient frontier.

\cite{karoui2013realized} estimates the out-of-sample variance when the number of assets is large. 

For the Sharpe ratio,  \cite{bailey2014deflated} and \cite{harvey2015backtesting}  derive estimates of the out-of-sample Sharpe if the in-sample Sharpe ratio is obtained by maximizing over $N$ trials. Their approach is based on computing $p$-values for multiple hypothesis testing and using them as a heuristic for the Sharpe ratio. \cite{novymarx2015} looks at critical values for selecting the $k$ best out of $N$ independent signals. While critical values for hypothesis testing essentially correct for the noise fit in the data, they do not adjust for the estimation error.\footnote{This is as statistical standard tests are designed for inference about the predictive power of the unknown \textit{true parameter} but not about the parameter that will be applied out-of-sample, namely the  noise contaminated \textit{estimated parameter}. This can be best seen for the likelihood ratio test, whose test statistic (in the Gaussian case) is $\frac{1}{2}\chi^2(k)$ distributed, while the difference between in-sample and out-of-sample log likelihood is $\frac{1}{2}\chi^2(2 k)$ distributed which is why the Akaike Information Criterion punishes the number of parameters by a factor $2$ (see also Section 4.2). }
 
Recently, \cite{kourtis2016sharpe} derived an approximate correction for the estimation error (but not noise fit) for the squared Sharpe ratio obtained from maximizing over a k-dimensional parameter space under the assumption that the true Sharpe ratio is known. As the squared Sharpe ratio (which can turn negative numbers positive) approximates mean-variance utility, this is up to order $o(1/T)$ the same correction as in \citet*[equation (16)]{kan2007optimal}, who derive an estimator for the out-of-sample mean variance utility in dependence of the unknown optimal one. Their adjustment is half the adjustment in the AIC as there is no correction for noise fit. 

\cite{kan2015economic}, among other results, calculate the distribution of the out-of-sample Sharpe ratio. However, also their result depends on knowledge of the true Sharpe ratio. In contrast, we do not assume the true Sharpe ratio to be known and provide an estimator based purely on observable data.

%While estim error long rec i=witin portfoio management an analogue of AIC within portfolio management seems to be missing and only few results on OOS sr. 

%Yet none estimates OOS SR unbiased and analogon of AIC

%\color{black}

%\red{\cite{karoui2013realized}, \cite{kirby2012optimizing}, \cite{kirby2012s}, \cite{jorion1986bayes}, \cite{demiguel2009optimal}, \cite{ban2016machine}}

Two applications of SRIC lie on the hand. First, an investor maximizing an investment strategy over $k$ parameters, that could be assets in her portfolio or return predictors, might be interested in an estimate of its out-of-sample Sharpe ratio. 

Second, an investor might be interested in how many parameters to maximize her strategy upon. Do price-dividend ratios add to performance if one already has a factor based on price-earning ratios? This is a question of model selection.  The Akaike Information Criterion (AIC) selects the portfolio which optimizes an estimate of the out-of-sample log-likelihood. In the Gaussian case this is the same as maximizing an estimate of out-of-sample mean-variance utility.  SRIC as model selection criterion selects the portfolio which optimizes an unbiased estimate of the out-of-sample Sharpe ratio.

SRIC punishes less than AIC for additional parameters. The intuition is that a (naive) mean-variance investor will take on too much risk because of over-optimistic in-sample estimates. AIC correctly punishes for that. In contrast, a Sharpe ratio investor is not affected by the absolute risk exposure but only the risk return trade-off. 

A more realistic mean-variance investor, however, would shrink the in-sample estimates (e.g. using the estimator derived in this paper), take less leverage and hence be as well more interested in the Sharpe ratio rather mean-variance utility. Also for such an investor SRIC would be the right model selection criterion. 

Correcting the in-sample Sharpe ratio for noise fit and estimation error turns out to be simple. Let $\hat \rho $ be the in-sample Sharpe ratio maximized over a $k$-dimensional\footnote{Here, $k$ is the number of parameters that influence the Sharpe ratio, i.e. the leverage of a portfolio is not counted. For example the problem of choosing the (ex-post) optimal portfolio out of $k+1$ assets possesses $k$ parameters as one parameter is redundant and only determines the volatility. To be precise, the scale parameter not only determines the volatility but also the sign of the portfolio (long or short). This, however, is a discrete choice and does not matter asymptotically (for large $T$). } parameter space and $T$ years of in-sample data. The main result of this paper is that if we define 
\begin{eqnarray}
\text {SRIC}  = \underbrace{\hat \rho  \vphantom{ \left(\frac{a}{\hat b}\right) }  }_{\text{in-sample fit}}   - \underbrace{ \frac{k}{T \hat \rho}}_{\text{estimated noise fit and estimation error}}
\label{eq_intro_sic}
\end{eqnarray} 
where SRIC stands for \textit{Sharpe Ratio Information Criterion}, then
\begin{eqnarray*}%\label{eq_sic_property}
\E\brae{\text{out-of-sample Sharpe ratio}} = \E\brae{\text {SRIC}}.
\end{eqnarray*} 
In particular, SRIC is an unbiased estimator of the out-of-sample Sharpe ratio. 

Notice the simplicity of the expression, that only involves the in-sample Sharpe, the number of parameters and the length of the in-sample period in years. For example, if there is an investment strategy with $k=5$ parameters and an optimal in-sample Sharpe of $\hat \rho=1$ over $T=10$ years of data, then the estimated out-of-sample Sharpe would be $1-\frac{5}{10}= 0.5$.

\begin{figure}[H]
	\centering
			\includegraphics[width=8cm]{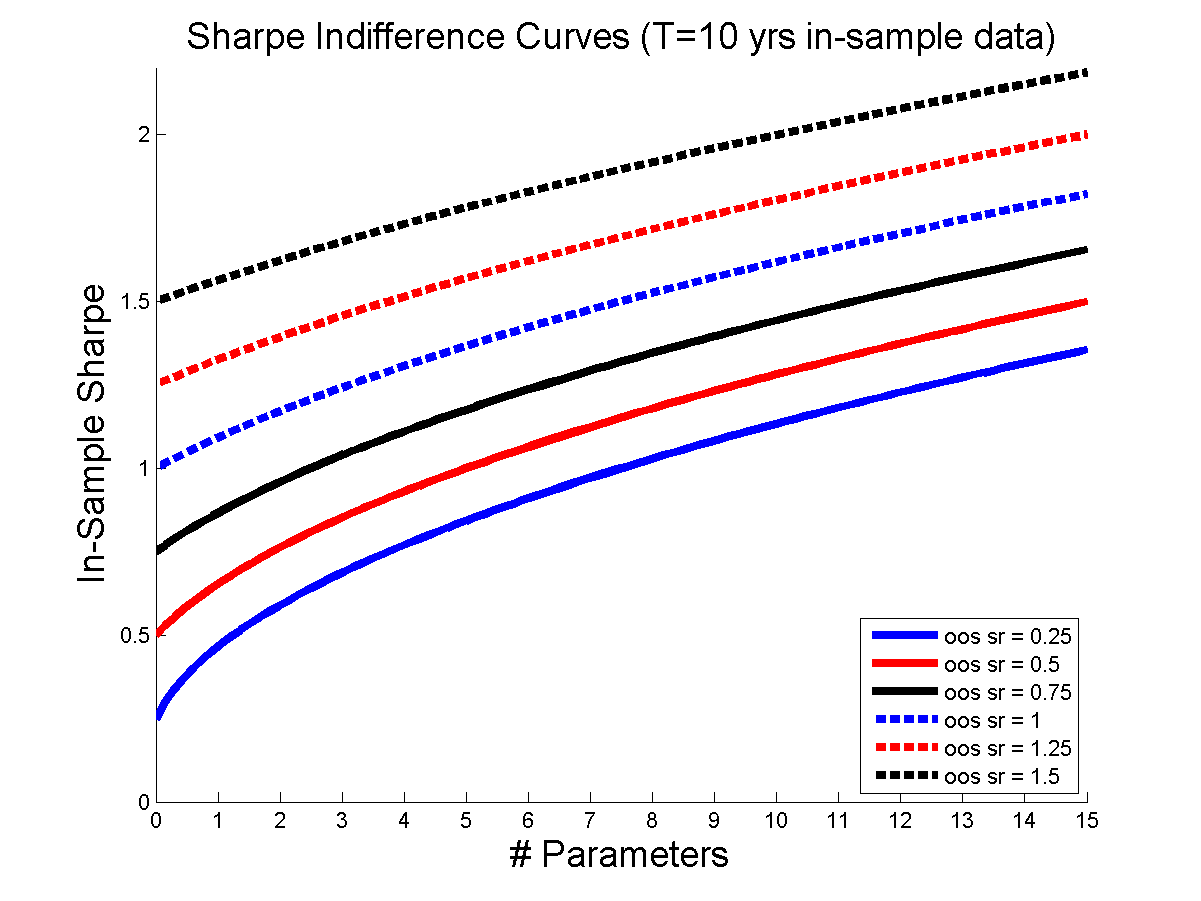}
		\flushleft
	%\caption*{This is a test}
	{\textbf{Figure 1 (Sharpe Indifference Curves):} In-sample Sharpe ratios that are needed for a fixed estimated out-of-sample Sharpe depending on the number of parameters assuming $10$ years of in-sample data.}
	\label{fig1_sharpe_indiff}
\end{figure}

Figure 1 illustrates SRIC as a model selection criterion as well as an informant about out-of-sample performance in form of \textit{Sharpe Indifference Curves}, i.e. combinations of measured in-sample Sharpe ratios and number of parameters that lead to the same estimated out-of-sample Sharpe ratio (for $T=10$ years of in-sample data). 

A special case of model selection lies in portfolio optimization. It has been documented that on a variety of data sets the (naive) portfolio optimization using in-sample estimates for mean and covariance to build the Markowitz portfolio underperforms simple benchmarks like the equally weighted portfolio (see, e.g. \cite{demiguel2009optimal}), the cause of which has been attributed to estimation error. 
On a high-level, the question of whether to choose the equally weighted portfolio or the Markowitz portfolio or something in-between, can be seen as the question on how many basis vectors (e.g. principal components) to optimize upon. If one uses only one basis vector, like equal weights, there is little estimation risk, but if one optimizes over a complete basis one obtains the Markowitz portfolio and larger estimation risk, see, e.g. \cite{chen2016efficient}.

%How many basis vectors to optimize upon is a question of model selection. 
%%MV is wrong bec one extra dimension. Estimation eror not fulll answer, trade off. demiguel have trade off for MV.

SRIC puts a price on noise fit and estimation error in terms of the Sharpe ratio and thereby makes the trade-off between in-sample fit and estimation error precise. Namely, the reason for why in many empirical data sets the equally weighted portfolio outperforms the Markowitz portfolio is two-fold. One is estimation error. The other is how much one gets in return for it. Often, most of the return opportunity is already captured by the equally weighted portfolio (or the first principal component for that matter). Additional portfolio directions offer little added return opportunities, often too small to justify the increase in estimation risk. When is the trade-off beneficial to the Sharpe ratio and when not? This question is answered by SRIC.
%
%To see this suppose one starts with the equally weighted portfolio as the first factor and incrementally considers to to add more and more orthogonal factors (e.g. principal components) weighted with their past returns. At the limit, if one adds the maximum number, one arrives at the Markowitz portfolio. 
%Now assume there is no reason to assume that the second added factor has a lower risk adjusted return than the first one. In this case one should add it in order to diversify the estimation error because even though the total estimation error is higher, the average estimation risk declines while the average return opportunity stays constant. If, however, most of the return opportunities would be concentrated in the first factor, the reduced average estimation error would be paid for by decreasing average returns. A trade-off potentially not worth making. 

As an illustration we look at the set of $10$ industry portfolios from Kenneth French's website also used by \cite{demiguel2009optimal}. While we reconfirm that for estimation horizons of $5$ to $10$ years the Markowitz portfolio is inferior to the equally weighted portfolio, we find that this no longer holds true for estimation horizons of $2$ years and less. Though this is mainly due to first half of the sample, it illustrates our point that estimation error, being even larger for smaller estimation horizons, cannot be the full story.

The intuition is that the Markowitz portfolio relies on the sample mean and therefore effectively is a trend or momentum portfolio. While there are no trends on higher components on a time scale of $5$ to $10$ years, there are trends on a time scale of $2$ years or less which are worth paying the estimation error. Which model to choose, a simple model like the equally weighted portfolio or a more complex one like the Markowitz portfolio, can be decided by SRIC.

%The Contribution of this paper is multifold; most contributions being illustrated by the applications mentioned above.  
%Additional things to highlight when it comes to summing up the contributions are:  i) unlike AIC (at least directly) SRIC allows to compare models estimated over different in-sample periods. ii) There is leeway to incorporate transaction cost into the equation. iii) We also derive uncertainty bounds for noise fit and estimation error and iv) due to the close relation to AIC, this paper provides an interpretation of the latter in terms of noise fit and estimation error. At least the authors have learned a lot about AIC while thinking about this problem.

Additional things to highlight when it comes to summing up the contributions are:  i) unlike AIC (at least directly) SRIC allows to compare models estimated over different in-sample periods. ii) We also derive uncertainty bounds for noise fit and estimation error and iii) due to the close relation to AIC, this paper provides an interpretation of the latter in terms of noise fit and estimation error. At least the authors have learned a lot about AIC while thinking about this problem.

For the ease of exposition we restrict this article to the case where the parameters influence returns linearly. However, the results are still valid asymptotically in a more general setting of non-linear dependence. % as we showed in an earlier working paper version (\cite{paulsen2016noise}). 

We shall also mention the $R$ package by Steven Pav (\cite{pav2015package}) and the papers surrounding it (\cite{pav2014portfolio, pav2015inference, pav2015notes}) which provide tools and theorems for statistical inference (noise fit) on the true Sharpe ratio  exploiting the fact that the squared Sharpe ratio follows a $\chi^2$-distribution (respectively $F$-distribution, when the covariance is estimated).

The next Section~\ref{section_setup} introduces the formal setup. Section~\ref{section_linear_case} formulates the main theorem. Section~\ref{section_model_selection} discusses the results in the context of model selection and compares to the Akaike information criterion. Section~\ref{section_other_application} illustrates the results in various examples and toy applications. Section~\ref{section_extensions} discusses extensions before we conclude in Section~\ref{section_conclusion}. Most proofs are relegated to the appendix.

%The non-linear case (from section \ref{main} on) is- though more interesting- also technically more involved. 

%The rest of the paper is organized as follows: Section \ref{section_literature} reviews the literature. Section \ref{section_setup} introduces the set-up and the main result for the linear case. Section \ref{section_main_general} deals with a more general but technically more involved case. Section \ref{section_model_selection} reviews the result from the perspective of model selection and discusses relations to the AIC. Section \ref{section_other_application} sketched other possible applications. Section \ref{section_extensions} comments on extensions  and section \ref{section_conclusion} concludes. Most proofs are delegated to the appendix.  

\section{Set-Up}
\label{section_setup}

%I start by describing a reduced set-up which is sufficient for the case in which the potentially unobserved predictions depend linearly on the parameters.

%\subsection{Set-Up}

Fix a probability space $(\Omega, (\mathcal F_t)_{t\geq 0}, \P)$. Let $\Theta= \R^{k+1}$ be a (k+1)-dimensional parameter space\footnote{We use the notation $k+1$ as one dimension will only influence the volatility and only $k$ parameters will influence the Sharpe ratio. }. Let $r_t\in \R^{k+1}$, $t \in [0,T]$ be a (k+1)-dimensional series of returns over $T$ years. Be it asset returns, factor returns or returns associated with a specific predictor variable. We are interested in the Sharpe ratio of the returns $s^{\theta}_t = r_t \theta$. That is where returns are linearly parametrized by $\theta$. Call $[0,T]$ the \textit{in-sample period}. 

Denote by $\hat \mu \in \R^{k+1}$ the (annualized) \textit{estimated} mean return of $r$ and $\Sigma$ its (annualized) covariance matrix. Then $\hat \mu^T \theta$ is the estimated mean return of $s^{\theta}_t$ and $\theta^T \Sigma \theta$ its variance. We assume that $\hat \mu$ is a noisy observation of the true mean return $\mu$, that is $\hat\mu = \mu + \nu$ where $\nu$ is a random variable with covariance matrix $\frac{1}{T} \Sigma$. Here we assume that $\nu$ is normally distributed\footnote{This assumption eases the analysis and is not too restrictive. Here, the quantities of concern are $T$-year return averages. So even with non-normal returns, thanks to the central limit theorem, its $T$-year averages are close to being normal. All results, however, are still asymptotically valid (for large $T$) for other than the normal distribution when moments are sufficiently bounded.  }. We further assume that $\Sigma$ has full rank.

Note that we made the assumption that $\nu$ has the same covariance matrix as $r$ scaled by $\frac{1}{T}$. This is the case if $\hat \mu$  is the \textit{realized} mean return over the in-sample period.

While the true returns are observed with noise, we assume that the covariance matrix $\Sigma$ can be observed without error. This is for instance true in a continuous time framework. But even without continuous return observations it is \textit{close to the truth} in reality as the estimated covariance is orders of magnitudes more precise than the estimated mean return provided the number of parameters is not \textit{too large}. For instance in Example \ref{example_regression} below, even with $500$ stocks, we only need the covariance matrix of a few predictor variables over years of observed data. We do not need the covariance of the $500$ stocks themselves. We will comment on this later. 

Applications of the set-up lie wherever portfolio returns are linearly parametrized  and are therefore ample within portfolio and asset management.  The purpose is often to predict returns and maximize the out-of-sample Sharpe ratio.  In the most simple case the vector $\theta$ could describe portfolio weights on assets or (tradable) risk factors $i=1,\ldots,k+1$.  We illustrate this in Example \ref{example_portfolio_choice}.

%For instance, the vector $\theta$ could describe portfolio weights on assets $i=1,\ldots k+1$, or exposure to (tradable) risk factors. In another case, the weights could be obtained as coefficients in a linear regression for return predictions.    

\begin{example}[Portfolio Choice]\label{example_portfolio_choice} Let there be $k+1$ return streams $r^i_t$.  Let  $$s_t^{\theta} = \sum\limits_{i=0}^{k} \theta_i r_t^i$$ be the return of a portfolio with weights $\theta$. With $\hat \mu$ the annualized realized mean return of $r^i$, $\Sigma$ its covariance, $\mu$ the (unknown) true mean and $T$ the observation length (all annualized), the setting fits the set-up described above. 
\end{example}

To be clear, we do not envision $r_t^i$ to be $500$ different stocks of the S\&P500. We rather think of $r_t^i$ in Example \ref{example_portfolio_choice}  as being the returns of factors or base portfolios like the first few principal components, long/short portfolios on stocks sorted by characteristics like price-dividend ratio, book-value, or momentum and where the optimal factor portfolio can be estimated over years of data.  

The main application we have in mind is estimating the out-of-sample predictive power of linear predictors. 

\begin{example}[Regression, Practitioner's perspective]\label{example_regression} Let there be $i=1,\ldots,N$ markets with return $y^i_t$. Consider the following linear model
\begin{equation}
y^i_t = \sum\limits_{j=0}^k x^{i,j}_t \theta_j + \eps_t^i \label{eq_ez22}
\end{equation} where the mean return is linear in $k+1$ exogenous predetermined characteristics $x^{i,j}$  and the random errors are $\eps_t^i$. Let  $S_t\in \R^{N,N}$ be the known or unknown covariance of $y_t$. For example, we could try to predict the return of the $N=500$ stocks in the S\&P500 Index via characteristics like price earning ratio, price dividend ratio, momentum or other variables. Now consider a practitioner that bets with weights $w_t = \hat S_t^{-1} x_t \theta \in \R^N$ on the return $y_t=(y^1_t,\ldots,y^N_t)$ where $\hat S_t$ is a weighting matrix\footnote{Ideally it is equal to the (unknown) market covariance, hence the notation. However, any (!) weighting scheme, e.g. equal weights, will do as long as it is deterministic (or predetermined at time $t$ for practical purposes). The set-up at hand treats~$\hat S_t$ as exogenous. Our analysis will be \textit{conditional} on the weighting $\hat S_t$. That is we are interested in the Sharpe ratio that an investor obtains as a combined choice of a weighting scheme $\hat S_t$ and parameter $\theta$, where she maximizes over the latter.} that maps the predictions $x_t\theta$ into portfolio weights. By that she receives the following series of returns 
\begin{eqnarray*}
y_t^T w_t &=& y_t^T \hat S_t^{-1} x_t \theta\\
&=&r_t^T \theta \quad \text{ with $r_t^T= y_t^T \hat S_t^{-1} x_t $}.
\end{eqnarray*}
%$$ y_t \hat S_t^{-1} x_t \theta =r_t \theta$$
With returns $r_t$, $\hat \mu$ its annualized (realized) mean return and $\Sigma$ its covariance, the setting is as described in this section. It is important to note that $\Sigma\in \R^{k+1, k+1}$ is of dimension $k+1$, the number of predictors, which is usually much lower than $N$, the number of stocks and that the regression will usually be performed over long time horizons. The practitioner can optimize the in-sample Sharpe ratio over $\theta$ and, applying the estimator derived in this paper, get an unbiased estimate of the out-of-sample Sharpe ratio. 
\end{example}

In this paper, like in Example \ref{example_regression}, we take the perspective of a practitioner looking at portfolio returns parametrized by $\theta$. Note that when her weighting matrix $\hat S_t$ differs from the market covariance~$S_t$, the optimal parameter $\hat \theta$ does not necessarily equal the true parameter $\theta$ in equation (\ref{eq_ez22}). But the practitioner does not care about the true parameter, she cares about the parameter that maximizes the Sharpe ratio of her portfolio. 

However, when the market covariance is known\footnote{Actually an estimate would work as well, if the estimate is such that the implied covariance in $\theta$ equals the realized covariance. More precisely, we need that the quadratic term in the GLS equals the true $(k+1)$-dimensional covariance matrix in $\theta$, namely:
$$\sum\limits_t x_t^T \hat S_t^{-1} x_t =  \sum\limits_t x_t^T \hat S_t^{-1} S_t \hat S_t^{-1}x_t$$
This is a consistency condition on the market covariance estimates $\hat S_t$.}, we can give these portfolios a generalized least square (GLS) interpretation. Consequently, $\theta$ will be estimated consistently. 

\begin{example}[Regression, Academic's perspective]\label{example_leastsquares} In context of Example \ref{example_regression} a natural way to estimate $\theta$ would be to minimize the sum of squared residuals weighted by the inverse covariance matrix (generalized least square regression):
\begin{eqnarray*}
\hat \theta = && \operatorname{arg} \min\limits_{\theta} \sum\limits_t \bra{y_t - x_t\theta }^T  S_t^{-1} \bra{y_t - x_t\theta }
\end{eqnarray*}
where $S_t$ is the market covariance. Now
\begin{eqnarray*}
&&\min\limits_{\theta} \sum\limits_t \bra{y_t - x_t\theta }^T  S_t^{-1} \bra{y_t - x_t\theta }\\
&\Leftrightarrow&   \min\limits_{\theta}  -2\sum\limits_t y_t^T \underbrace{ S_t^{-1} x_t \theta}_{w_t} + 
\sum\limits_t \underbrace{\theta^T x_t^T  S_t^{-1} x_t \theta}_{w_t^T  S_t w_t} \\
&\Leftrightarrow&   \max\limits_{\theta}  2 \hat \mu^T \theta - \theta^T  \Sigma \theta \quad \text{ with $\hat \mu^T = c \sum\limits_t y_t^T  S_t^{-1} x_t$ and $ \Sigma = c \sum\limits_t x_t^T S_t^{-1} x_t$ }
\end{eqnarray*}
where $c>0$ is an annualization factor, $\hat\mu^T \theta$ is the annualized mean return of betting with the weighted predictions $w_t=S_t^{-1} x_t \theta$ on the $N$ markets, that is of $r_t^T= y_t^T S_t^{-1} x_t$, the return associated to $\theta$, and $ \Sigma$ its covariance. Hence the generalized least square regression is the same as maximizing the mean variance utility (and therefore the Sharpe ratio) of the weighted portfolio and vice versa, if the weighting $S_t$ equals the market variance.
\end{example}

Note that $S_t$ is generally unknown and the generalized least square estimator (GLS) requires an estimate of it in the same way as the practitioner in Example \ref{example_regression} needs a weighting scheme $\hat S_t$. 

The practitioner does not need to correctly specify the $N$-dimensional covariances $S_t$, whatever she thinks would be an appropriate weighting scheme that generates returns - be it a heuristic - would be fine. The analysis in this paper, i.e. the estimates of the out-of-sample Sharpe ratio are conditional on her choice of $\hat S$ and therefore not subject to any estimation error in $\hat S$. A bad choice would generate bad returns, as a bad choice within GLS would yield less efficient estimates. 

\color{black}

Example \ref{example_leastsquares} shows that GLS is the same as mean-variance optimization where the translation between the regression's predictions and mean variance optimal weights is given by $w_t = S_t^{-1} \hat r_t$. In particular, the parameter that minimizes least squares also maximizes the Sharpe ratio.

Hence, in terms of parameter estimation least square regression and maximization of the Sharpe ratio are equivalent. They are different, however, when it comes to estimating the out-of sample statistics and model selection. For regression (mean-variance utility) the Akaike information criterion selects the model that maximizes estimated out-of sample fit. For the Sharpe ratio, in contrast, SRIC as defined in Theorem \ref{theorem_linear_case_nf_me} selects the model with the highest estimated out-of-sample fit.

\color{black}

%We remark that $T$ rather than being the length of the in-sample period could also be interpreted as the precision of a \textit{view} on asset returns. 
\begin{remark}In a different interpretation, rather than observing $T$ years of data and with $\hat\mu$ as the sample mean, $\hat \mu$ could be a \textit{view} or prior belief of the investor obtained from a different model or a quantified guess like in the Black--Litterman model (\cite{black1991asset}). In this case $\nu$ models the uncertainty around the view and $T$ describes its precision (inverse of the variance).
\end{remark}

%\begin{remark}
%We also note that while we concentrate here on the case in which the returns $s^{\theta}_t$ depend linearly on the parameter $\theta$ we relax this assumption in a companion paper. The intuition is that while the dependency might be non-linear, in the limit all observed quantities converge to the true ones. Locally around the true parameter, the behavior is approximately linear. So control of large deviations from the the optimal parameter, combined with a Taylor approximation will ensure our results still hold asymptotically. The derivation, however, comes at the price of more technicalities. 
%\end{remark}

We distinguish between the in-sample Sharpe ratio $\rho$ and the out-of-sample Sharpe ratio $\tau$:

\begin{definition}[Sharpe ratio]  \label{def_insample_sr}
The \textit{in-sample} Sharpe ratio of parameters $\theta$ is 
\begin{equation*}
\rho(\theta) = \frac{\hat \mu^T \theta}{\sqrt{\theta^T \Sigma \theta}}. %\frac{\hat \mu(\theta)}{\sqrt{\Sigma(\theta, \theta)}}
\end{equation*}
The (unobserved) \textit{out-of-sample} Sharpe ratio, denoted by $\tau$ (for \textit{true} Sharpe ratio), follows by removing the noise term from the mean returns:
\begin{equation*}
\tau(\theta) = \frac{\mu^T \theta}{\sqrt{\theta^T \Sigma \theta}} %\frac{\mu(\theta)}{\sqrt{\Sigma(\theta, \theta)}}
\end{equation*}
\end{definition}

Now consider an investor who maximizes the Sharpe ratio among all parameters $\theta$. As there is a difference between the in-sample and out-of-sample Sharpe, both maximizers are different. We use the following notation:
\begin{notation}\label{not_theta_thetastar}
Denote by $\hat \theta $ a vector of parameters that maximizes the in-sample Sharpe ratio and let $\theta^* $ be a parameter maximizing the out-of-sample Sharpe ratio\footnote{Note that $\hat \theta$ and $\theta^*$ are only unique up to multiplication with a positive constant. }. 
\begin{equation*}
\hat \theta \in \arg \max\limits_{\theta \in \Theta} \rho(\theta), \quad  \theta^* \in \arg \max\limits_{\theta \in \Theta} \tau(\theta)
\quad %\text{s.t. } \Sigma(\theta, \theta)=\max\limits_{\theta } \rho(\theta)^2 
%\label{eq_theta_max_is_sr} 
%\nonumber
\end{equation*}
%
%\begin{equation*}
 %\theta^* \in \arg \max\limits_{\theta \in \Theta} \tau(\theta)
%%\quad \text{s.t. } \Sigma(\theta, \theta)=\max\limits_{\theta } \tau(\theta)^2 
%%\label{eq_theta_max_true_sr} \nonumber
%\end{equation*}
%Though these strategies will for now not ne, quantities such as $\rho(\hat \theta)$ will be\footnote{which is why we don't bother and don't introduce additional notation at the moment}. 
We abbreviate $\hat \rho= \rho(\hat \theta)$, $\hat \tau=\tau(\hat \theta)$, $ \tau^*=\tau(\theta^*)$,  $\rho^*=\rho(\theta^*)$ 
\end{notation}

We summarize the notation in Table 1.

\begin{figure}[H]
\centering
\begin{table}[H]
\centering
\begin{tabular}{c|c|c|c}
\hline
Value &Symbol & Parameter  & Description\\ \hline \hline
$\rho(\hat \theta )$ & $\hat \rho$ & $\hat \theta$  & Sharpe ratio of optimal in-sample \\
& & & parameter applied to in-sample data set \\ \hline
$\rho(\theta^* )$ & $ \rho^*$ & $\theta^*$  & Sharpe ratio of optimal out-of-sample \\
&&& parameter applied to in-sample data set \\ \hline

$\tau(\hat \theta )$ & $\hat \tau$ & $\hat \theta$ & Sharpe ratio of optimal in-sample \\
&&& parameter applied to out-of-sample data set \\ \hline
$\tau( \theta^* )$  & $\tau^*$ & $\theta^*$  & Sharpe ratio of optimal out-of-sample \\
&&& parameter applied to out-of-sample data set \\ \hline
\end{tabular}
\end{table}
%\caption*{\textbf{Table \ref{table_1}:} 
 %All four combinations of in- and out-of-sample Sharpe and true and estimated parameter}
\flushleft
\textbf{Table 1:}  All four combinations of in- and out-of-sample Sharpe and true and estimated parameter 
%\end{figure}
\label{table_1}
\end{figure}

With Notation~\ref{not_theta_thetastar} we get the decomposition that is central to this paper:
\begin{eqnarray}
\underbrace{\hat \tau \vphantom{\bra{\hat \rho}}}_{ \text{oos Sharpe}}= \underbrace{\hat \rho \vphantom{\bra{\hat \rho}}}_{\text{is Sharpe}} - \underbrace{\bra{\hat \rho - \rho^*} }_{\text{noise fit}} - \underbrace{\bra{ \tau^*-\hat\tau}}_{\text{estimation error}} + \underbrace{\tau^* - \rho^* \vphantom{\bra{\hat \rho}} }_{\text{noise}}
\label{eq_decomposition}
\end{eqnarray}

Decomposition (\ref{eq_decomposition}) says that the out-of-sample Sharpe ratio equals the in-sample Sharpe ratio minus three terms. First, the difference in Sharpe ratio between the estimated and the true parameter on the in-sample set, which can be interpreted as noise fit. Second, the difference in Sharpe ratio between the estimated and the true parameter on the out of-sample set, that is estimation error. Third, the difference in Sharpe ratio of the true parameter between the in-sample and the out-of-sample data set, which is the noise in the in-sample data.

We record this decomposition in the next definition.

\begin{definition} \label{def_noise_fit_sr}
The in-sample Sharpe ratio can be decomposed into
\begin{equation*}
\hat \rho = \hat \tau + \N+\EE+\U% \nonumber
\label{eq_decomposition2}
\end{equation*} with $\E\brae{\U}=0$,
where the following definitions are applied
\begin{eqnarray*}
\mathcal N &=& \rho(\hat \theta) - \rho(\theta^*) \nonumber \quad \text{(Noise Fit)}, \\
	\EE &=& \tau(\theta^*) - \tau(\hat \theta)\nonumber  \quad \text{(Estimation Error)}, \\
\U &=&\rho(\theta^*) - \tau(\theta^*)  \nonumber \quad \text{(Noise)}.
\end{eqnarray*}
\end{definition}

Naturally two questions emerge. First, (1) how informative is $\hat \rho$, the maximal in-sample Sharpe ratio,  for $\tau^*$ the true optimum? Choosing the (ex post) optimal parameter $\hat \theta$ will lead to some noise fit and therefore its Sharpe ratio will (in expectation) overestimate the true Sharpe ratio. But by how much? And second, if the estimated parameter $\hat \theta$, rather than the true parameter $\theta^*$, is applied out-of-sample, what Sharpe ratio can be expected. That is (2) how high is the degradation in Sharpe ratio due to the combination of noise fit and estimation error?

The first question needs a Taylorization and is technically more involved which is why we only remark about it here. %However, in a companion paper in which we prove an analogous result in a more general non linear case, we also deal with noise fit and estimation error separately. 
We answer the second question in Section~\ref{section_linear_case} below. Before that, we comment on our assumption of a known covariance matrix.

When optimizing the Sharpe ratio, there are two different sources of noise. Noise in the estimated covariance matrix and noise in the estimated returns. This corresponds to two different kinds of asymptotics. If we fix the time horizon and increase the sampling frequency, e.g. from monthly over daily to hourly, we get more and more accurate estimates of the covariance. The estimates of the mean returns, however, do not become more precise. In the limit of continuous-time observations the covariances are observed without error and thus known, while the mean returns are not. If, on the other hand, we increase the time horizon,  return estimates become less noisy as well.  

Here, we focus on the case in which the covariance matrix is known and only the mean returns are estimated with noise. This is the relevant case when estimation error in the mean return is present, e.g. in the regression setting in which a few predictive parameters are optimized based on several months or years of in-sample data. 
The reason is that noise in estimated mean returns is so severe that in any realistic application one only optimizes over relatively few assets or parameters and relatively long estimation horizons. 
 %as estimation error in the return dominates estimation error in the covariances by orders of magnitudes. Hence in any realistic appliucation Even without estimation error As we show in Theorem  

%\red{as nobody would use naive optimization over 500 assets and few days. Here we are talking about a few parameters on long lookbacks as our estimator also shows , even with noise purely in returns, this requires already heavy adjustments. So in any realistic scenario predictability is too low to include too many assets}

If, however,  mean returns are imposed (e.g. set equal for all assets as in the \textit{minimum variance} portfolio or set to analyst forecasts with tight confidence bounds) then estimation error in the covariance matrix dominates and the set-up at hand is not directly applicable. Let us remark, however, that if we have a noisy estimate $\hat \Sigma$ of the true covariance $\Sigma$ and we know the true return~$\mu$, we can write for the Markowitz portfolio 

\begin{equation}
w_{\operatorname{MV}}= \hat \Sigma^{-1} \mu =  \Sigma^{-1} \underbrace{\bra{\Sigma \hat \Sigma^{-1} \mu}}_{=\hat \mu} =\Sigma^{-1} \hat \mu.
\label{eq_noise_cov_mean_equivalent}
\end{equation}

Hence the weights are \textit{as if} we knew the true covariance but would apply it to a noisy estimate $\hat \mu$ of returns. From this perspective, noise in the covariance matrix and noise in the mean returns are very similar.

\section{Main Theorem}
\label{section_linear_case}

%\subsection{Main Theorem}
We are now ready to formulate our main theorem.

\begin{theorem}[Estimation Error and Noise Fit combined for the Sharpe ratio]\label{theorem_linear_case_nf_me} Let $k\geq 1$.  Suppose $\hat \mu=\mu +\nu$ where $\nu \sim \N(0, \frac{1}{T}\Sigma)$ is normally distributed, and $\Sigma$ has full rank, then it holds
\begin{equation*}
\E\brae{ \N+\EE+\U} = \E\brae{ \frac{k}{ T \hat \rho} }.
%\label{eq_lin_noisefit_estimation} \nonumber
\end{equation*} In  particular, we have
\begin{equation}
\E\brae{\hat \tau} = \E\brae{ \hat \rho  -\frac{k}{T \hat \rho} }.
\label{eq_lin_noisefit_estimation2} 
\end{equation}

\end{theorem}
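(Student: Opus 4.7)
\medskip
\noindent\textbf{Proof plan.} The in-sample Sharpe maximizer has the closed form $\hat\theta \propto \Sigma^{-1}\hat\mu$ (by Cauchy--Schwarz), so that $\hat\rho = \sqrt{\hat\mu^T\Sigma^{-1}\hat\mu}$ and $\hat\tau = \mu^T\Sigma^{-1}\hat\mu\,/\sqrt{\hat\mu^T\Sigma^{-1}\hat\mu}$. The first step is to whiten the noise: set $z=\Sigma^{-1/2}\hat\mu$ and $m=\Sigma^{-1/2}\mu$, so that $z\sim\N(m,\tfrac{1}{T}I_{k+1})$, $\hat\rho=\|z\|$, and $\hat\tau=m^T z/\|z\|$. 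In these coordinates $\hat\rho-\hat\tau = (z-m)^T z/\|z\|$, and the target identity \eqref{eq_lin_noisefit_estimation2} reduces to
\[
\E\brae{\tfrac{(z-m)^T z}{\|z\|}} \;=\; \tfrac{1}{T}\,\E\brae{\tfrac{k}{\|z\|}}.
\]

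The second step is to recognise this as an instance of Stein's identity (Gaussian integration by parts): for $Z\sim\N(m,\sigma^2 I_n)$ and $f\colon\R^n\to\R$ sufficiently regular, $\E\brae{(Z-m)^T\nabla f(Z)}=\sigma^2\,\E\brae{\Delta f(Z)}$. I would apply this with $n=k+1$, $\sigma^2=1/T$, and $f(z)=\|z\|$. A direct calculation gives $\nabla f(z)=z/\|z\|$ and $\Delta f(z)=(n-1)/\|z\|=k/\|z\|$, so substitution produces exactly the displayed identity and hence \eqref{eq_lin_noisefit_estimation2}.

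The only genuine technical obstacle is that $f(z)=\|z\|$ fails to be smooth at the origin and $1/\|z\|$ is unbounded there, so the textbook statement of Stein's lemma does not directly apply. I would handle this by mollifying---e.g.\ working with $f_\varepsilon(z)=\sqrt{\|z\|^2+\varepsilon}$, applying Stein's identity for each $\varepsilon>0$, and passing to the limit via dominated convergence. The required integrability of $1/\|z\|$ against the Gaussian density follows from $\int_0^\infty r^{-1}\,r^{n-1}\,dr=\int_0^\infty r^{n-2}\,dr$ being finite near the origin for $n=k+1\geq 2$; the degenerate case $k=0$ is trivial, since then $\hat\theta$ and $\theta^*$ agree up to sign and both sides of \eqref{eq_lin_noisefit_estimation2} equal $\E\brae{\hat\rho}$.
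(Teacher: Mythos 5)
Your proof is correct for $k\geq 1$ and is essentially the paper's own argument: the paper proves the key identity by integrating out each coordinate $\nu_i$ via Gaussian integration by parts and summing, which is exactly the multivariate Stein identity you invoke, with the paper's computation $\sum_i\sum_{j\neq i}(\mu_j+\nu_j)^2/\norm{\mu+\nu}^3 = k/\norm{\mu+\nu}$ being precisely your $\Delta\norm{z}=k/\norm{z}$; your mollification step makes explicit a regularity point the paper passes over silently. One quibble: your remark on $k=0$ is wrong --- there $\hat\theta$ is only a sign choice, $\hat\rho-\hat\tau=\sign(\hat\mu)\nu/\sqrt{\Sigma}$ has strictly positive expectation, so the identity fails exactly (the distributional derivative of $\sign$ contributes a point mass that Stein's identity for the non-smooth $f(z)=\abs{z}$ does not capture); this degenerate case is implicitly excluded by the paper and should simply be excluded rather than claimed as trivial.
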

\begin{proof}
See Appendix.
\end{proof}
Theorem \ref{theorem_linear_case_nf_me} shows that $SRIC = \hat \rho - \frac{k}{T \hat \rho} $
is an unbiased estimator of the expected out-of-sample Sharpe ratio $\E\brae{\hat \tau}$. We give it the name SRIC for \textit{Sharpe Ratio Information Criterion} for reasons that become clear in Section \ref{section_model_selection}.

%\color{red}
The bias correction can be split into noise fit and estimation error. 
% Grant the assumptions of Theorem \ref{theorem_linear_case_nf_me}. 
\begin{theorem}\label{theorem_nf_ee_split} Grant the assumptions of Theorem \ref{theorem_linear_case_nf_me}.
Assume that $\tau^*>0$, then the bias correction in (\ref{eq_lin_noisefit_estimation2}) can be split into noise fit and estimation error as follows
\begin{eqnarray*}
\hat \tau \approx \text {SRIC} = \underbrace{\hat \rho  \vphantom{ \left(\frac{a^{0.3}}{b}\right) }  }_{\text{in-sample fit}} - \underbrace{ \frac{k}{2 T \hat \rho}}_{\text{estimated noise-fit}}- \underbrace{\frac{k}{2 T \hat \rho}}_{\text{estimated estimation error}} 
\end{eqnarray*} 
where the split is valid asymptotically of order $o(T^{-1})$.  
 
\end{theorem}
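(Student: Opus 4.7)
The plan is to reduce to isotropic coordinates, Taylor-expand both noise fit and estimation error to second order in the noise, and show that both have expectation $k/(2T\tau^*)$ up to $o(T^{-1})$; then transfer from $\tau^*$ to $\hat\rho$.

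First I would set $a := \Sigma^{-1/2}\mu$ and $\hat a := \Sigma^{-1/2}\hat\mu = a + \nu_a$ with $\nu_a := \Sigma^{-1/2}\nu \sim \N(0, T^{-1}I_{k+1})$ isotropic. Since the Sharpe ratio is scale-invariant and the first-order conditions give $\hat\theta \propto \Sigma^{-1}\hat\mu$ and $\theta^* \propto \Sigma^{-1}\mu$, the four Sharpe quantities collapse to
\begin{align*}
\hat\rho = \|\hat a\|, \quad \tau^* = \|a\|, \quad \rho^* = \hat a^\top a/\|a\|, \quad \hat\tau = a^\top\hat a/\|\hat a\|,
\end{align*}
and hence $\N = \|\hat a\| - \hat a^\top a/\|a\|$ and $\EE = \|a\| - a^\top\hat a/\|\hat a\|$.

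Next I would expand in $\nu_a$. With $y := a^\top\nu_a$ and $\|\hat a\|^2 = \|a\|^2 + 2y + \|\nu_a\|^2$, second-order Taylor expansion of $\sqrt{\cdot}$ and $1/\sqrt{\cdot}$ around $\|a\|^2$, after cancellation of the linear-in-$\nu_a$ pieces, yields
\begin{align*}
\N = \frac{\|\nu_a\|^2 - y^2/\|a\|^2}{2\|a\|} + O(\|\nu_a\|^3) = \frac{\|P_{a^\perp}\nu_a\|^2}{2\|a\|} + O(\|\nu_a\|^3),
\end{align*}
and the identical leading expression for $\EE$, where $P_{a^\perp}$ is the orthogonal projection onto $a^\perp$. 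Because $\nu_a$ is isotropic Gaussian and $\dim a^\perp = k$, one has $T\|P_{a^\perp}\nu_a\|^2 \sim \chi_k^2$, so $\E[\|P_{a^\perp}\nu_a\|^2] = k/T$; odd Gaussian moments vanish, so the cubic remainder contributes only $O(T^{-2})$ in expectation. Hence $\E[\N] = \E[\EE] = k/(2T\tau^*) + O(T^{-2})$.

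Finally, to replace $\tau^*$ by $\hat\rho$ I would note that $\hat\rho - \tau^* = O_p(T^{-1/2})$ with good moment control, so Taylor expansion of $1/\hat\rho$ around $1/\tau^*$ and taking expectations gives $\E[k/(2T\hat\rho)] = k/(2T\tau^*) + O(T^{-2})$ as well. Combined with Theorem~\ref{theorem_linear_case_nf_me} (which via $\E[\U]=0$ says $\E[\N+\EE] = \E[k/(T\hat\rho)]$), this establishes that each of $\N$ and $\EE$ is estimated by $k/(2T\hat\rho)$ up to bias $o(T^{-1})$, as claimed. The main obstacle is the rigorous control of Taylor remainders in expectation: one needs $\tau^* > 0$ so that the expansion is carried out at a base point bounded away from zero, together with a uniform moment bound on $1/\|\hat a\|$ on the high-probability event where $\|\nu_a\|$ is small, and Gaussian tail estimates to dispose of the complementary event.
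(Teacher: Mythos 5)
Your proposal is correct and follows essentially the same route as the paper: a second-order Taylor expansion in the Gaussian noise, identification of the quadratic term as $\|P_{a^\perp}\nu_a\|^2/(2\|a\|)$ with expectation $k/(2T\tau^*)$, and substitution of $\hat\rho$ for $\tau^*$ at the end. The only (harmless) difference is that you expand the estimation error $\EE$ directly as well, whereas the paper expands only $\hat\rho$ (equivalently the noise fit $\N$, since $\E[\rho^*]=\tau^*$) and obtains the other half of the split by subtracting from the total correction of Theorem \ref{theorem_linear_case_nf_me}; the remainder control you rightly flag is handled in the paper at the same level of detail, via a brief appeal to Gaussian concentration.
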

The proof is in the appendix.

We now quantify the uncertainties around $\hat \tau$.

\begin{theorem}\label{theorem_uncertainties}
Let $k\geq0$. Suppose $\hat \mu=\mu +\nu$ where $\nu \sim \N(0, \frac{1}{T}\Sigma)$ is normally distributed, and $\Sigma$ has full rank. For the distribution of the difference between the in-sample Sharpe ratio and the out-of-sample Sharpe ratio holds the following: 
\begin{enumerate}
	\item If the true Sharpe ratio $\tau^*=0$ is zero, so is the out-of-sample Sharpe $\hat \tau$ and we have 
	\begin{equation}
	\hat \rho - \hat \tau  = \hat \rho -0 =\norm{\nu}_{\Sigma^{-1}}\stackrel{\mathcal{D}}{=}  \frac{1}{\sqrt{T}} \chi(k+1)
	\label{eq_confidence_set1}
	\end{equation}
	where $\chi(k+1)$ denotes the $\chi$-distribution (square root of $\chi^2$-distribution) with $k+1$ degrees of freedom and $\norm{x}_{\Sigma^{-1} }=\sqrt{x^T\Sigma^{-1}{x}}$.
	\item If $\tau^*>0$, we have, first, 
		\begin{equation}
	\hat \rho - \hat \tau  \leq \norm{\nu}_{\Sigma^{-1}} \stackrel{\mathcal{D}}{=} \frac{1}{\sqrt{T}} \chi(k+1)
	\label{eq_confidence_set2}
	\end{equation}
	so that the uncertainties are bounded by the the $\chi$-distribution. And, second,
\begin{equation}
\hat \rho - \hat \tau  \stackrel{\mathcal{D}}{=}  \frac{1}{T } \frac{1}{\tau^*} Z + \frac{1}{\sqrt{T}} N +R 
\label{eq_confidence_set}
\end{equation}
where $Z$ is $\chi^2(k)$-distributed, $N$ is an independent standard normal distribution and $R$ a remainder such that $\E\brae{T \abs{R}^p}\to 0$ as $T\to\infty$ for all $p\geq 1$.
\end{enumerate}
\end{theorem}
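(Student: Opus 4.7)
The plan is to work in whitened coordinates $\tilde\mu := \Sigma^{-1/2}\mu$ and $\tilde\nu := \Sigma^{-1/2}\nu \sim \N(0, \tfrac{1}{T}I_{k+1})$; throughout the proof I interpret $\norm{\nu}$ as the Mahalanobis norm $\sqrt{\nu^T\Sigma^{-1}\nu} = \norm{\tilde\nu}$, which satisfies $T\norm{\tilde\nu}^2 \sim \chi^2(k+1)$. Since $\hat\theta$ can be taken proportional to $\Sigma^{-1}\hat\mu$, one has $\hat\rho = \norm{\tilde\mu+\tilde\nu}$ and $\hat\tau = \tilde\mu^T(\tilde\mu+\tilde\nu)/\norm{\tilde\mu+\tilde\nu}$. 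The key identity is
\begin{equation*}
\hat\rho - \hat\tau \;=\; \hat u^T\tilde\nu, \qquad \hat u \;:=\; \frac{\tilde\mu+\tilde\nu}{\norm{\tilde\mu+\tilde\nu}},
\end{equation*}
obtained by writing $\hat\rho = \hat u^T(\tilde\mu+\tilde\nu)$ and $\hat\tau = \hat u^T\tilde\mu$ and subtracting. Because $\hat u$ is a unit vector, Cauchy--Schwarz immediately yields $\hat\rho - \hat\tau \le \norm{\tilde\nu} = \tfrac{1}{\sqrt T}\chi(k+1)$, proving part~2. In part~1 we have $\tilde\mu = 0$, so $\hat u = \tilde\nu/\norm{\tilde\nu}$ and the inequality becomes equality.

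For part~3 I rotate coordinates so that $\tilde\mu = \tau^* e_1$ and split $\tilde\nu = (\nu_1,\nu_\perp)$ into independent blocks $\nu_1 \sim \N(0,1/T)$ and $T\norm{\nu_\perp}^2 \sim \chi^2(k)$. Setting $N := \sqrt T\,\nu_1$ and $Z := T\norm{\nu_\perp}^2$, a direct computation from the key identity gives
\begin{equation*}
\hat\rho - \hat\tau \;=\; f\bigl(\nu_1,\,\norm{\nu_\perp}^2\bigr), \qquad f(a,b)\;:=\;\frac{\tau^* a + a^2 + b}{\sqrt{(\tau^*+a)^2 + b}}.
\end{equation*}
A crucial observation is $f(a,0) = a$ for every $a > -\tau^*$: the $a^2$ contributions in numerator and denominator cancel exactly, so there is no $\nu_1^2$ term to worry about. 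Taylor-expanding $f$ in $b$ around $b=0$ then gives $f(a,b) = a + b/\tau^* + O(ab + b^2)$, which translates to $\hat\rho - \hat\tau = \tfrac{1}{\sqrt T}N + \tfrac{1}{T\tau^*}Z + R$ with remainder of leading order $\nu_1\norm{\nu_\perp}^2 + \norm{\nu_\perp}^4 = O_p(T^{-3/2})$.

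The main obstacle is uniform $L^p$ control of $R$. On the high-probability event $A_T := \{\norm{\tilde\nu} \le \tau^*/2\}$, the denominator of $f$ is bounded below by $\tau^*/2$, and a quantitative Taylor estimate yields $|R| \le C\bigl(|\nu_1|\norm{\nu_\perp}^2 + \norm{\nu_\perp}^4\bigr)$ with a constant $C$ depending only on $\tau^*$; hence $\E[T|R|^p\mathbf{1}_{A_T}] \le C\,T\,(T^{-3p/2} + T^{-2p}) \to 0$ for every $p \ge 1$. Off $A_T$, the triangle inequality together with part~2 gives $|R| \le 2\norm{\tilde\nu} + \norm{\tilde\nu}^2/\tau^*$, whose $2p$-th moment is $O(T^{-p})$, while the Gaussian tail bound $P(A_T^c) = P\bigl(\chi^2(k+1) > T\tau^{*2}/4\bigr) \le Ce^{-cT}$ is super-exponentially small. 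Cauchy--Schwarz then gives $\E[|R|^p\mathbf{1}_{A_T^c}] \le \bigl(\E|R|^{2p}\bigr)^{1/2}P(A_T^c)^{1/2}$, which is faster than any polynomial in $1/T$; combining with the estimate on $A_T$ yields $\E[T|R|^p] \to 0$ and completes part~3.
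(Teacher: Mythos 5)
Your proof is correct and follows essentially the same route as the paper's: the identity $\hat\rho-\hat\tau=\nu^T(\mu+\nu)/\norm{\mu+\nu}$ with the Cauchy--Schwarz bound $\norm{\nu}$, the rotation sending $\mu$ to $\tau^*e_1$, and the identification $N=\sqrt{T}\,\nu_1$, $Z=T\norm{\nu_\perp}^2$ are all exactly the paper's steps. Your handling of the remainder (exact cancellation $f(a,0)=a$, plus the split into the event $A_T$ and its exponentially small complement) is more explicit than the paper's, which merely asserts $\E\brae{T\abs{R}^p}\to 0$, but this is a refinement of the same argument rather than a different approach.
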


Equation (\ref{eq_confidence_set1}) is convenient under the null hypothesis of a true Sharpe ratio of $0$ which can be tested by using that $T\hat\rho^2$ follows a $\chi^2 $ distribution. Equation (\ref{eq_confidence_set}) has an intuitive interpretation. While the $N$-term corresponds to the noise in the in-sample data set, the $Z$-term corresponds to noise fit and estimation error. With a higher true Sharpe ratio $\tau^*$, any deviation from the true parameter becomes more costly so that the parameters are estimated more precisely and the $Z$-term becomes less relevant.

We conclude this section by proving an analogous result for mean variance utility rather than the Sharpe ratio as measure of fit. Though results are not new, for instance \cite{kan2007optimal} and \cite{hansen2009sample} show results closely related to ours, we find it instructive to present them in a way consistent with the set-up here. This is in particular useful when we later relate SRIC to the Akaike Information Criterion (AIC). 

For this let
\begin{equation*}
\hat u(\theta) = 2 \hat\mu^T \theta - \gamma \theta^T \Sigma \theta \nonumber
\label{eq_mv_utility_is}
\end{equation*}
be the in-sample and
\begin{equation*}
 u(\theta) = 2 \mu^T \theta - \gamma \theta^T \Sigma \theta \nonumber
\label{eq_mv_utility_os}
\end{equation*}
the out-of-sample mean variance-utility. The parametrization is such that for $\gamma=1$ holds $\hat u (\hat \theta) = \rho(\hat\theta)^2$, the squared in-sample Sharpe ratio.

The analogue of Definition \ref{def_noise_fit_sr} is
\begin{definition} \label{def_noise_fit_mv} The in-sample mean-variance utility can be decomposed into
\begin{equation*}
	\hat u (\hat \theta)  = u(\hat\theta)+\mathcal N_{MV} +\EE_{MV}+ \U_{MV}\nonumber
	\label{eq_noisefitmv1}
\end{equation*}
with $\mathcal N_{MV} = \hat u(\hat \theta) - \hat u(\theta^*)$ (Noise Fit),  $\EE_{MV} =u (\theta^*) - u (\hat \theta)$ (Estimation Error), $\U_{MV} =\hat u (\theta^*) - u(\theta^*) $ (Noise).
%\begin{eqnarray*}
%\mathcal N_{MV} &=& \hat u(\hat \theta) - \hat u(\theta^*)   \nonumber \quad \text{(Noise Fit)} \\
	%\EE_{MV} &=&u (\theta^*) - u (\hat \theta) \nonumber  \quad \text{(Estimation Error)} \\
%\U_{MV} &=&\hat u (\theta^*) - u(\theta^*)  \nonumber \quad \text{(Noise)}
%\end{eqnarray*}
\end{definition}

Now it is easy to show
\begin{theorem}[Noise-Fit and Estimation Error for Mean-Variance]
\label{theorem_linear_case_squared_nf_me} %Under the assumptions of this section holds
Grant the assumptions of Theorem \ref{theorem_linear_case_nf_me}. It holds:
\begin{equation*}
\E\brae{ \N_{MV}} = \frac{k+1}{ \gamma T }
\label{eq_square_noisefit_estimation} \nonumber
\end{equation*}  and
\begin{equation*}
\E\brae{ \N_{MV} + \EE_{MV}+\U_{MV}} = \frac{2(k+1)}{ \gamma T }.
\label{eq_square_noisefit_estimation2} \nonumber
\end{equation*} 
In particular, $\hat u (\hat \theta)-\frac{2 (k+1)}{\gamma T} = \frac{1}{\gamma} \hat \rho^2-\frac{2 (k+1)}{\gamma T}$ is an unbiased estimator for the out-of-sample utility $u(\hat \theta)$.
%Moreover, because of the Gaussian distribution of the noise, holds: 
%\begin{equation} \hat u(\hat \theta)- u(\hat \theta) \stackrel{\D}{=} \frac{2}{T\gamma}\chi^2 ( k+1)
%\end{equation} is $\chi^2$-distributed with  $k+1$ degrees of freedom.
\end{theorem}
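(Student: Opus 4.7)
The strategy is to reduce everything to simple quadratic forms in $\nu$ using the $\Sigma^{-1}$-geometry that is already exploited in Section~\ref{subsection_geometry}. Since the in-sample and out-of-sample objectives are both strictly concave quadratics in $\theta$, the first-order conditions give closed-form maximizers
\begin{equation*}
\hat\theta = \tfrac{1}{\gamma}\Sigma^{-1}\hat\mu, \qquad \theta^* = \tfrac{1}{\gamma}\Sigma^{-1}\mu,
\end{equation*}
and the ``at-the-optimum'' identity $\gamma\,\hat\theta^{T}\Sigma\hat\theta = \hat\mu^{T}\hat\theta$ lets me eliminate the quadratic term in $\hat u(\hat\theta)$. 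First I would plug $\hat\theta$ and $\theta^*$ into both $\hat u$ and $u$ to obtain
\begin{equation*}
\hat u(\hat\theta) = \tfrac{1}{\gamma}(\mu+\nu)^{T}\Sigma^{-1}(\mu+\nu), \quad
\hat u(\theta^*) = \tfrac{1}{\gamma}\bigl[\mu^{T}\Sigma^{-1}\mu + 2\nu^{T}\Sigma^{-1}\mu\bigr],
\end{equation*}
\begin{equation*}
u(\hat\theta) = \tfrac{1}{\gamma}\bigl[\mu^{T}\Sigma^{-1}\mu - \nu^{T}\Sigma^{-1}\nu\bigr], \quad
u(\theta^*) = \tfrac{1}{\gamma}\mu^{T}\Sigma^{-1}\mu.
\end{equation*}

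The next step is a direct subtraction. Taking differences, the cross terms $\mu^{T}\Sigma^{-1}\nu$ cancel neatly, giving the striking equality
\begin{equation*}
\mathcal N_{MV} \;=\; \hat u(\hat\theta)-\hat u(\theta^*) \;=\; \tfrac{1}{\gamma}\,\nu^{T}\Sigma^{-1}\nu \;=\; u(\theta^*)-u(\hat\theta) \;=\; \EE_{MV},
\end{equation*}
while the noise term becomes linear in $\nu$:
\begin{equation*}
\U_{MV} \;=\; \hat u(\theta^*)-u(\theta^*) \;=\; \tfrac{2}{\gamma}\,\nu^{T}\Sigma^{-1}\mu.
\end{equation*}

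To finish, I invoke the distributional assumption $\nu\sim\mathcal N(0,\tfrac{1}{T}\Sigma)$. Then $\sqrt{T}\,\Sigma^{-1/2}\nu$ is standard normal in $\R^{k+1}$, so $\nu^{T}\Sigma^{-1}\nu \sim \tfrac{1}{T}\chi^2(k+1)$ and hence $\E[\nu^{T}\Sigma^{-1}\nu]=(k+1)/T$. This immediately yields $\E[\mathcal N_{MV}]=(k+1)/(\gamma T)$, and since $\E[\nu]=0$ kills $\U_{MV}$, adding the three terms gives the $2(k+1)/(\gamma T)$ bias. The unbiased estimator statement then follows from the decomposition $\hat u(\hat\theta) = u(\hat\theta)+\mathcal N_{MV}+\EE_{MV}+\U_{MV}$ together with the identification $\hat u(\hat\theta) = \tfrac{1}{\gamma}\|\hat\mu\|_{\Sigma^{-1}}^2 = \tfrac{1}{\gamma}\hat\rho^2$.

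There is no real obstacle here beyond spotting the cancellation: the crux is the observation that in the Euclidean $\Sigma^{-1}$-geometry both $\mathcal N_{MV}$ and $\EE_{MV}$ are exactly $\tfrac{1}{\gamma}\|\nu\|_{\Sigma^{-1}}^2$, which is why the mean-variance case admits an \emph{exact} (not merely asymptotic) bias correction, in contrast to the Sharpe ratio of Theorem~\ref{theorem_linear_case_nf_me}. The factor of two in the final bias is thus a direct consequence of noise fit and estimation error being, as quadratic forms in $\nu$, literally the same random variable.
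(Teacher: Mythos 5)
Your proposal is correct and follows essentially the same route as the paper: plug the closed-form maximizers $\hat\theta=\tfrac{1}{\gamma}\Sigma^{-1}\hat\mu$ and $\theta^*=\tfrac{1}{\gamma}\Sigma^{-1}\mu$ into the quadratic utilities, reduce everything to quadratic forms in $\nu$, and use $\E\brae{\nu^T\Sigma^{-1}\nu}=(k+1)/T$. The only (minor, and welcome) refinement over the paper's computation is that you evaluate the three terms pathwise and observe the exact identity $\N_{MV}=\EE_{MV}=\tfrac{1}{\gamma}\nu^T\Sigma^{-1}\nu$, whereas the paper only takes expectations of $\N_{MV}$ and of the sum directly.
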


We will later show that the Akaike Information Criterion (AIC) can (after a linear transformation) be interpreted as $\hat u (\hat \theta)-\frac{2 (k+1)}{\gamma T}$ and thus as an unbiased estimator of the out-of-sample utility.

%
%The following theorem derives an adjustment for the noise-fit. 
%
%\begin{theorem}[Noise Fitting] \label{theorem_linear_case_noise_fit} In the linear case, i.e. $\Theta = \R^k$ and (\ref{}) and (\ref{}) with, the following is true
%\begin{equation}
%\tau(\theta^*)^2 = \E\brae{\rho(\hat \theta)^2} - \frac{k}{T}
%\end{equation}
%\red{To Do: Moreover: Distribution is (asymptotically) $\chi^2(k)$}
%\end{theorem}
%The proof is in the appendix.
%
%The estimation error is slightly more complicated, as it involves a Taylor series approximation. 
%
%\begin{theorem} \label{theorem_linear_case_estimation_error}
%Let $\norm{\theta^*}>0$ and $\nu$ have finite fourth moments, then 
%\begin{equation}
%\E\brae{\tau (\hat \theta)^2} = \tau (\theta^*)^2 + \frac{k-1}{T} + o(1/T)
%\end{equation}
%where the remainder $o(1/T)$ converges faster to zero than $1/T$.
%\end{theorem}
%The proof is in the appendix.

\section{Model Selection (Sharpe Information Criterion)}
\label{section_model_selection}

\subsection{Sharpe Information Criterion}

\label{section_sharpe_information_criterion}

In the previous sections, the objective was to quantify noise fit and estimation error in order to gain insights about the out-of-sample Sharpe ratio. In this section we apply the results to model selection. 
Let $\Theta= \Theta^1 \dot{\cup} \ldots \dot{\cup} \Theta^n$ with $\Theta^i \subset \R^{k_i+1}$ be a family of parameter spaces. Model selection is about selecting a pair $(i, \theta_i)$, that is a model $i\in \{1,\ldots,n\}$ and a parametric fit $\theta_i  \in \Theta^i$. A typical goal of model selection is to choose the model with the highest out-of-sample fit\footnote{Some criteria have other goals, e.g. the Bayesian Information Criterion (BIC) maximizes the asymptotic posterior probability of choosing the \textit{true} model. }. 

In this paper, we measure \textit{fit} by the out-of-sample Sharpe ratio. Naturally, the out-of-sample Sharpe ratio $\hat \tau(\theta_i)$ is a a random variable whose distribution depends on unknowns like the true parameters. However, by Theorem \ref{theorem_linear_case_nf_me}, SRIC is an unbiased estimator for the out-of-sample Sharpe ratio and can be calculated on observables like the in-sample Sharpe ratio only. This suggests to choose the model $i$ with the highest SRIC
\begin{equation}
SRIC^i = \rho(\hat \theta_i) - \frac{k_i}{T \rho(\hat \theta_i) }
\label{eq_sric_modelselection}
\end{equation}
where $\hat \theta_i$ maximizes $\rho$ over $\Theta^i$. 

SRIC as selection criterion is justified by Theorem \ref{theorem_linear_case_nf_me} as long as estimates of the out-of-sample Sharpe ratio are the variable of interest (that is the first moment of the distribution of out-sample Sharpe ratios).  For higher moments, matters get more complicated. We refer to Theorem \ref{theorem_uncertainties} where we derive the asymptotic distribution of out-of-sample Sharpe ratios. 

We now show that SRIC is exactly analogous to the Akaike Information Criterion (AIC), with the difference that the latter uses log-likelihood rather than the Sharpe ratio as measure of fit. %However, as the set-up is not Bayesian set-up there does not exist a specific loss function that is minimized by SRIC. 

\subsection{Relation to AIC}
\label{section_relation_to_aic}

%In this section, we show that, under the assumptions in this paper (i.e. Gaussian Noise), the \textit{SAIC} is just (a linear transformation of) of the Akaike Information Criterion (AIC).  
%In brevity, the log likelihood function of the predictive model can be interpreted as a mean-variance utility function. Due to the free choice of leverage, the optimized utility/log likelihood (multiplied by $2$ as in AIC) is given by the squared Sharpe ratio (times observation length). AIC then punishes for complexity by\footnote{Note that there are $k+1$ parameter} $2 (k+1) $, which can be understood as penalty of $k+1$ for noise fitting and another $k+1$ for the estimation error, that is we have
%\begin{equation}
%SAIC = \hat \rho^2- \frac{2k}{T}
%\end{equation}
%and

To derive the Akaike Information Criterion (AIC), see \cite{akaike1974new},  we need to associate $\theta \in \Theta$ with a prediction and derive its log-likelihood. That is we need to underpin our set-up with a predictive model first before we can attach a meaning to log-likelihood in this context. This is done in the appendix where we show

\begin{theorem}\label{theorem_loglikelihood}
For a suitable underlying predictive model and for an appropriate reference measure~$\P^0$:
\begin{eqnarray*}
\frac{2}{T}\log \frac{d\P^{\theta}\brae{ (p_t)_{t \in [0,1]} }}{d\P^0\brae{ (p_t)_{t \in [0,1]} } } &=&  2 \hat \mu^T \theta -   \theta^T \Sigma\theta.
\end{eqnarray*}
\end{theorem}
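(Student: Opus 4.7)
The plan is to specify a concrete Gaussian predictive model in which the parameter $\theta$ indexes the drift, and then read off the log-density of the observation path by Girsanov's theorem; the three equalities in the statement then follow by routine algebra. The natural choice is to let $(p_t)_{t \in [0,T]}$ be a continuously observed $(k+1)$-dimensional diffusion with infinitesimal covariance $\Sigma$, driven by $dp_t = \Sigma \theta\, dt + \Sigma^{1/2}\, dW_t$ under $\P^\theta$ and by $dp_t = \Sigma^{1/2}\, dW_t$ under the reference measure $\P^0$. Parametrising the drift as $\Sigma \theta$ (rather than $\theta$ itself) is the decisive modelling choice: it makes the maximum-likelihood estimator coincide with the Sharpe-maximiser $\hat\theta = \Sigma^{-1}\hat\mu$ of Subsection \ref{subsection_geometry}, and makes $\hat\mu = (p_T - p_0)/T$ the sufficient statistic, matching the paper's definition of the realised mean return. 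The apparent $[0,1]$ versus $[0,T]$ in the statement is a harmless rescaling of time.

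With the model fixed, Girsanov's theorem (equivalently, the Cameron--Martin shift of a Brownian motion) gives immediately
\[
\log\frac{d\P^{\theta}}{d\P^0}\bigg|_{\mathcal{F}_T} \;=\; \theta^T (p_T - p_0) - \tfrac{T}{2}\, \theta^T \Sigma \theta \;=\; T\, \hat\mu^T \theta - \tfrac{T}{2}\, \theta^T \Sigma \theta,
\]
which is the first equality. The second equality follows by substituting the definition $\hat u(\theta) = 2\hat\mu^T \theta - \gamma \theta^T \Sigma \theta$ with $\gamma = 1$, so that $\tfrac{T}{2}\gamma\, \hat u(\theta) = T\, \hat\mu^T \theta - \tfrac{T}{2}\, \theta^T \Sigma \theta$. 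The third equality $\tfrac{T}{2}\gamma\, \hat u(\theta) = \tfrac{T}{2}\, \hat\rho^2$ is understood at the MLE $\theta = \hat\theta = \Sigma^{-1}\hat\mu$, where plugging in yields $\hat u(\hat\theta) = \hat\mu^T \Sigma^{-1} \hat\mu = \norm{\hat\mu}^2_{\Sigma^{-1}} = \hat\rho^2$ by the geometric identity of Subsection \ref{subsection_geometry}.

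The main obstacle is conceptual rather than technical: choosing a predictive model whose score matches the paper's already-fixed statistics $\hat\mu, \Sigma$ and whose MLE agrees with $\hat\theta$. Once the drift is taken as $\Sigma \theta$, the rest is a one-line Girsanov calculation followed by substitution of the $\hat u$ and $\hat\rho$ definitions. If one prefers to avoid stochastic calculus, the identical formula is obtained from a discrete-time model with i.i.d.\ observations $r_1, \ldots, r_T \sim \mathcal N(\Sigma\theta, \Sigma)$ under $\P^\theta$ and $\mathcal N(0, \Sigma)$ under $\P^0$: the log-ratio of joint Gaussian densities, after cancelling the terms quadratic in $r$ common to numerator and denominator, equals $T\,\hat\mu^T\theta - \tfrac{T}{2}\,\theta^T\Sigma\theta$ with $\hat\mu = \tfrac{1}{T}\sum_{t=1}^T r_t$.
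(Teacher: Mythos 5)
Your proof is correct and follows essentially the same route as the paper: both exhibit a Gaussian drift model for the observed price path in which $\theta$ parametrizes the drift, apply Girsanov to obtain $\log\frac{d\P^\theta}{d\P^0} = T\hat\mu^T\theta - \frac{T}{2}\theta^T\Sigma\theta$, and then substitute the definitions of $\hat u$ and $\hat\rho$ (the last equality holding at $\theta=\hat\theta$ with $\gamma=1$, as you correctly note). The only difference is that the paper's underlying model is more elaborate --- time-varying predictions $X_t(\theta)=X_t\theta$ on $m$ markets, with strategy returns obtained by betting prediction-weighted amounts, so that $\Sigma=\frac{1}{T}\int_0^T X_t^T X_t\,dt$ and $\hat\mu=\frac{1}{T}\int_0^T X_t^T Y_t\,dt+\frac{1}{T}\int_0^T X_t^T\,dW_t$ --- whereas your constant-drift $\Sigma\theta$ specification is in effect the special case $X_t\equiv\Sigma^{1/2}$, which suffices for the theorem as stated.
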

 
In particular, (not surprising in the Gaussian case) the log-likelihood is a multiple of mean-variance utility $\hat u$. Therefore, maximizing the former is equivalent to maximizing the latter.

The AIC is defined as 
\begin{equation*}
AIC = -2\text{log likelihood }+ 2 \cdot \text{number of parameter}
\label{eq_def_AIC}
\end{equation*}
so that we obtain by use of Theorem \ref{theorem_loglikelihood}
%&=&- T \gamma \hat u(\hat \theta) + 2 (k+1)\\
\begin{eqnarray*}
AIC &=& -2 \log \frac{d\P^{\theta}}{d\P^0} + 2\bra{k+1} = - T \hat \rho^2 + 2 (k+1)
\end{eqnarray*} and, after a linear transformation, our preferred normalization of the AIC
\begin{eqnarray}
\frac{- AIC}{T} &=& \hat \rho^2 - \frac{2(k+1)}{T}. \label{eq_aic_sharpe2}
\end{eqnarray}
Equation (\ref{eq_aic_sharpe2}) shows that we can express the AIC in terms of mean-variance utility or the squared Sharpe ratio. In combination with Theorem \ref{theorem_linear_case_squared_nf_me} this yields the interpretation of the AIC as expected out-of-sample mean-variance utility or log-likelihood for that matter, that is AIC corrects the in-sample mean-variance utility for noise fit and estimation error.  

%SRIC chooses a higher dimension than AIC.
%\red{To Do: Show that maximum is rho2 }
%
%Hence, selecting the maximum likelihood model is the same as selecting the model maximizing mean-variance utility function $u$
%\begin{eqnarray*}
%\hat \theta^{ML} &=& \argmax \hat u(\theta)\\
%\hat u(\theta)&=& 2 \hat \mu (\theta) - S(\theta, \theta)
%\end{eqnarray*}
%in particular (under the assumption that there is a \textit{scale} parameter allowing the vary the variance/volatility of the model (which is the case in the linear model and which we assume in the differentiable case) ), 
 %\begin{eqnarray*}
%\hat \theta^{ML} &=& \hat \theta \\
	%\max \hat u(\theta) &=& \rho(\hat \theta)^2
 %\end{eqnarray*}
%\red{(see XYZ)}
%
%Hence we have seen that maximizing ($2$ times )the log-likelihood is the same as maximizing the mean-variance utility function 
%\begin{equation}
%\hat u(\theta) = 2 \hat \mu (\theta) - \Sigma(\theta, \theta) 
%\end{equation}
%
%

%\subsection{Relation to AIC}

%\begin{remark}
%\red{To Do: Interpretation of Akaike as insample mean-variance Utility adjusted for Noise Fit and Estimation Error}
%\end{remark}

\begin{remark}  \label{remark_sricdimension}
SRIC chooses a higher dimension than AIC, but both converge towards each other for $T\to \infty$.\footnote{To see the first statement, note that
\begin{eqnarray*}
\frac{\frac{\partial SRIC}{\partial \rho}}{-\frac{\partial SRIC}{\partial k}}  =  T \rho+\frac{k}{\rho}
\geq   T \rho =\frac{\frac{\partial AIC}{\partial \rho}}{-\frac{\partial AIC}{\partial k}}.
\end{eqnarray*}
To see the second statement, observe
\begin{eqnarray*}
 T (SRIC)^2 &=& T \bra{\hat \rho -\frac{k}{T \hat \rho} }^2 
= T \hat \rho^2 - 2k + \bra{\frac{k}{ \hat \rho}}^2 \frac{1}{T} 
= -AIC + const + \bra{\frac{k}{ \hat \rho}}^2 \frac{1}{T}.
\end{eqnarray*}
Hence the difference between $-\operatorname{AIC}/T$ and SRIC is of order $o(T^{-1})$.} 
The intuition is that a mean-variance investor will be punished by loading on too much risk exposure (leverage) on overestimated in-sample estimates which AIC correctly penalizes for. The Sharpe ratio, by contrast, is not influenced by excessive leverage and therefore does not need this additional penalty. 
\end{remark}

%\subsubsection{Interpretation of Akaike as insample mean-variance Utility adjusted for Noise Fit and Estimation Error}
%\red{Summary: Akaike = Constant + (Mean Variance Optimum/ squared Sharpe) + Adjustment for Noise Fit + Overfit}

%%\subsection{Estimation Error}

%assuming normality (asymptotic) implies there is a connection
%Girsanov, Likelihood, AIC= the same if adjusted for estimation error

%\red{To Do: Interpration of AIC as adjustment to squared Sharpe}
%\subsubsection{Relation to BIC}

%\subsection{Example 1: Equal weights vs Different weights in Portfolio Selection}
%-> 1 or 5 year lookback
%check when and how often 1 year lookback optimization outperforms equal weight (see how rarely and what the critical Sharpe then has to be). 
%
%\subsection{Example 2: Timescale of Trendfollowing}
%
%e.g. 20 years of data -> something more can be said. -> model selection. also track 5-10 year optimized Sharpe [of the same lookback and different lookback system] and regress OOS vs IS performance estimate with error area controlling for ac 
%time series of one paraemeter optimized Sr (adjusted) and sector specific optimized (adjusted). Plot realized (with same Markets for the next 10 years).

\section{Applications}
\label{section_other_application}

Theorem \ref{theorem_linear_case_nf_me} has ample applications. Those lie wherever estimates of the out-of-sample Sharpe ratio are of interest, be it in portfolio or asset management, when searching for quantitative trading strategies, or within academia when asking whether in-sample anomalies are strong enough to be viable out-of-sample.  

Its first use is as an unbiased estimator for the out-of-sample Sharpe ratio. In Section~\ref{applications_oos1}, we illustrate this. As the theorem claims, the average simulated out-of-sample Sharpe ratio agrees with the average SRIC. 

Its second use is as model selection criterion analogous to the Akaike information criterion. Figure~1, already showed Sharpe Indifference Curves, that is combinations of in-sample Sharpe ratios and number of parameters that yield the same estimated out-of-sample Sharpe ratio. In the next sections we illustrate SRIC as model selection criterion in simulations (Sections \ref{applications_oos2} and \ref{applications_oos3}) and on real data (Section \ref{applications_industry}). For completeness we also show the application of SRIC in a regression setting (Sections~\ref{applications_reg1} and \ref{applications_reg2}). 

Note that assessing the performance of a model selection criterion is not straight forward and always depends on judgment. This is as its performance depends on the unknown true model. Any selection criterion which is biased towards the unknown true model has a head start. This problem does not vanish for simulations or large collections of data sets. Any simulation pins down a prior probability distribution for the true model and the prior determines the optimal bias for the model selection criterion. 

As an attempt to overcome this difficulty we i) choose simulation parameters that are neither biased towards a high- or low-dimensional model and ii) we vary a fixed true model over a range of dimensions and claim that SRIC does well over the broad range of true models even though  at the extremes a criterion which is biased towards the true model outperforms. 

\color{black}

\subsection{Simulation 1: Estimating Out-Of-Sample Sharpe Ratios}
\label{applications_oos1}

We illustrate Theorem \ref{theorem_linear_case_nf_me}. For this, we simulate $T=1,\ldots,25$ years of daily returns ($252$ daily returns per year) with different degrees of freedom and true Sharpe ratios\footnote{E.g. for $T=10$ years and $k=10$ degrees of freedom and a true Sharpe of $\tau^*=1$ we simulate $(r_{t,i})\in \R^{252T,k+1}$ so that $r_{t,i}$ are $\text{iid}$ $N(\mu,\sigma)$-distributed with $\sigma=0.1/\sqrt{252}$ and $\mu=0.1/(252\sqrt{k+1})\tau^*$}. 

We then calculate the optimal in-sample Sharpe ratio, the in-sample Sharpe ratio adjusted for noise fit according to Theorem \ref{theorem_nf_ee_split}, SRIC, and the out-of-sample Sharpe ratio.  

\begin{figure}[H]
\label{fig_k10}
\begin{minipage}[t]{0.5\textwidth}
\includegraphics[width=\textwidth]{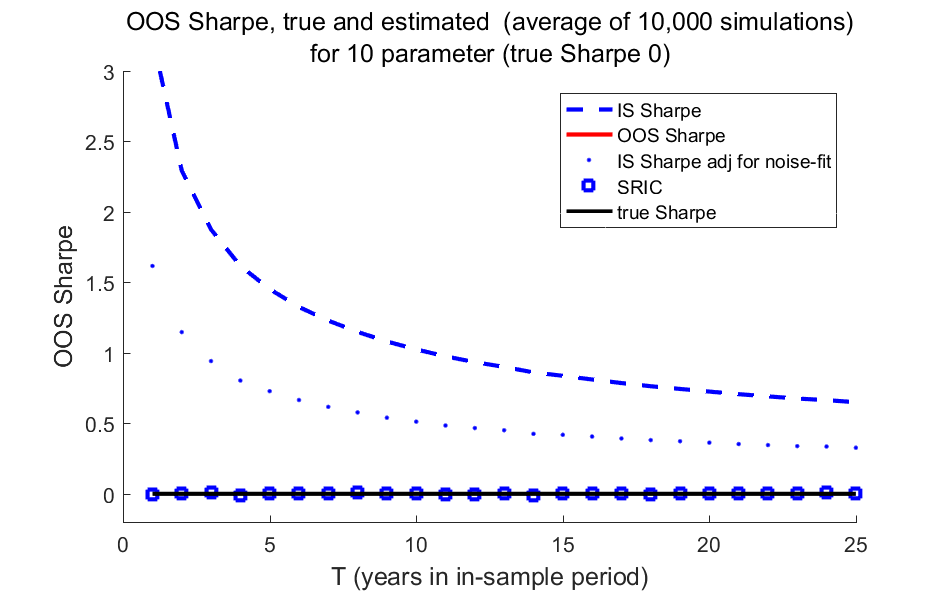}
\end{minipage}
\begin{minipage}[t]{0.5\textwidth}
\includegraphics[width=\textwidth]{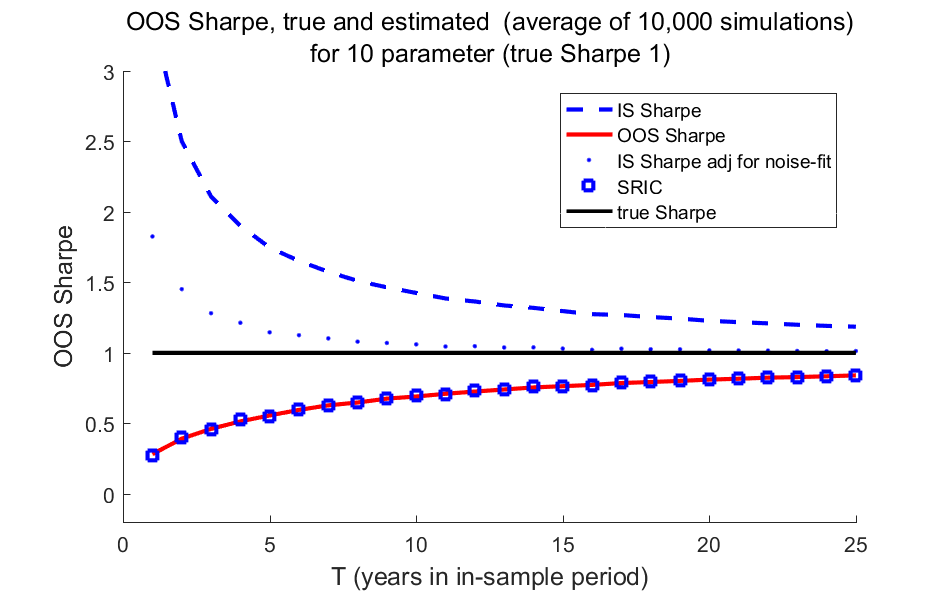}
\end{minipage}
{\textbf{Figure 2:} In-sample Sharpe ratio, adjustments for noise fit and estimation error and out-of-sample Sharpe ratio when the true Sharpe ratio is $\tau^*=0$ (left hand side) or $\tau^*=1$ (right hand side) when the number of parameters is $k=10$. All numbers are averages over $10,000$ random draws.}
\end{figure}

Figure 2 shows the results when the true Sharpe ratio is $\tau^*=0$ (left hand side) or $\tau^*=1$ (right hand side) for $k=10$ degrees of freedom. All numbers are averages over $10,000$ random draws. As the theorem claims, the in-sample Sharpe ratio adjusted for both, noise-fit and estimation error, fits the out-of sample Sharpe ratio.

%Figures 2a to 2c illustrate Theorem \ref{theorem_linear_case_nf_me}. They show the (average) in-sample Sharpe ratio, the in-sample Sharpe ratio adjusted for noise fit, the in-sample Sharpe ratio adjusted for both, noise fit and estimation error, and the out-of-sample Sharpe ratio when the true Sharpe ratio is $\tau^*=0$ (left hand side) or $\tau^*=1$ (right hand side) for different numbers of parameters $k$ ($k=1,5,10$). All numbers are averages over $10,000$ random draws. As the theorem claims, the in-sample Sharpe ratio adjusted for both, noise-fit and estimation error, fits the out-of sample Sharpe ratio.  

%\subsection{Model Selection I (Portfolio Choice)}
\subsection{Simulation 2: Basic Model Selection}

\label{applications_oos2}

%\subsubsection{Simulation 2a: Basic Model Selection}

In this section we test SRIC as model selection criterion and compare it to the Akaike Information Criterion (AIC) within a simulation. 

We simulate\footnote{That is with $\tau^*=1$ we simulate $(r_{t,i})\in \R^{1260,20}$ so that $r_{t,i}$ are independent $N(\mu_i,\sigma)$-distributed with $\sigma=0.1/\sqrt{252}$ and $\mu_i=0.1/(252\sqrt{20})\tau^*$} $T=5$ years of a $20$-dimensional model with a true Sharpe ratio $\tau^*$ of $1$. The $20$-dimensional model is part of a $40$-dimensional model, the full model, with a true Sharpe ratio of $x \geq 1$. We call the first model the base model and the second model the full model.

We now let SRIC and AIC decide between the base and the full model and record the out-of-sample Sharpe ratio. We average over $10,000$ random draws of $T=5$ years of in-sample realizations while we vary $x$, the true Sharpe ratio of the full model. 

\begin{figure}[H]
	\centering
		\includegraphics[width=18cm]{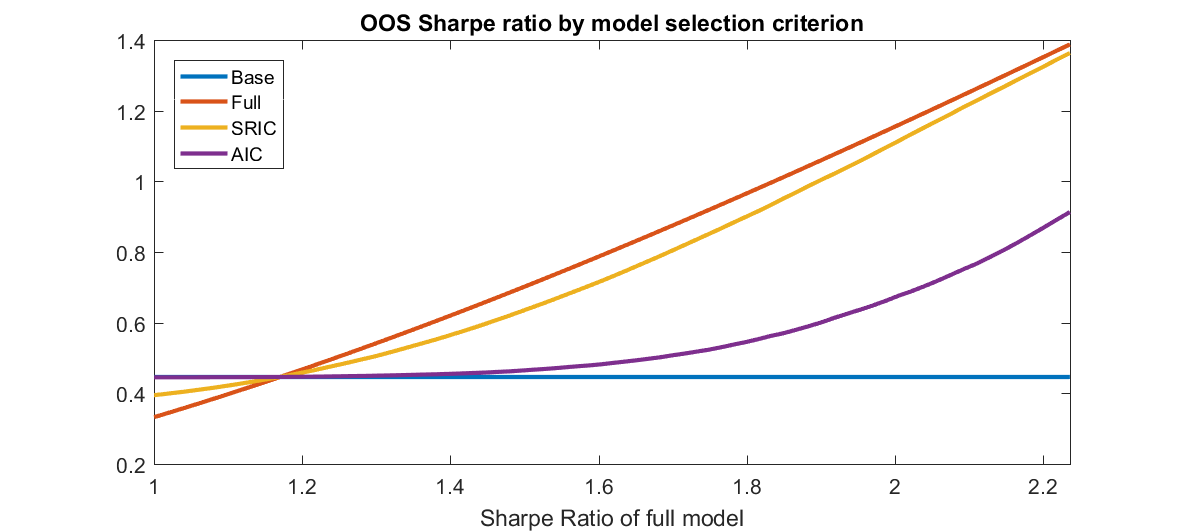}
		\flushleft

	{\textbf{Figure 3:} Average out-of-sample Sharpe ratio for selecting between base and full model depending on true Sharpe ratio of full model. Averages over $10,000$ trials. }
	\label{fig_vary_sr}
\end{figure}

Figure 3 shows the results. With a true Sharpe ratio for the full model of $1.17$ or lower it is optimal to stick with the lower dimensional base model and its true Sharpe ratio of $1$. 

As AIC is biased towards choosing lower dimensional models (see Remark \ref{remark_sricdimension}) it performs better than SRIC in the sub $1.17$ region. However, as the true Sharpe ratio is unknown so is the fact that a lower dimensional model is preferable. With a true Sharpe ratio of $1.17$ or higher, the full model yields a higher out-of-sample Sharpe ratio than the base model and SRIC performs better than AIC. In fact, it takes AIC a substantially higher Sharpe ratio for the full model to see its benefit and overcome its bias towards the smaller base model. 

\subsection{Simulation 3: Extended Model Selection}
\label{applications_oos3}
We now vary the dimension of the true model to see how SRIC performs for a range of different true dimensions.  

For this, we simulate $T=5$ years \`{a} $252$ daily in-sample returns all independent with an annualized volatility of 10\%. We vary the dimension of the true model between $1$ and $100$. For a true model with dimension $k^*$ we give all dimensions $k\leq k^*$ a Sharpe ratio that is uniformly distributed within $[0,0.5]$ and all dimensions $k>k^*$ a Sharpe ratio of $0$\footnote{That is we first draw $\hat x$ uniformly $[0,1]$ and conditional on $\hat x$ we have  $r_{t,i} \sim N(\mu_i,\sigma)$ with $\sigma = 0.1/\sqrt{252}$ and $\mu_i=0.5\hat x/252$ if $i\leq k^*$ and $0$ else, independent across $t=1,\ldots,1260$ and $i=1,\ldots,100$.}. 

Under a candidate model $k$ we understand the portfolio consisting of the first $k$ dimensions $1,\ldots,k$ which is optimal in-sample. We select models $\hat k$ according to SRIC and AIC and denote the corresponding out-of-sample Sharpe ratio. We also note the out-of-sample Sharpe ratios for the full model (Markowitz) and the $1D$ model (which just chooses the first dimension, i.e. $k=1$).

\begin{figure}[H]
	\centering
		\includegraphics[width=18cm]{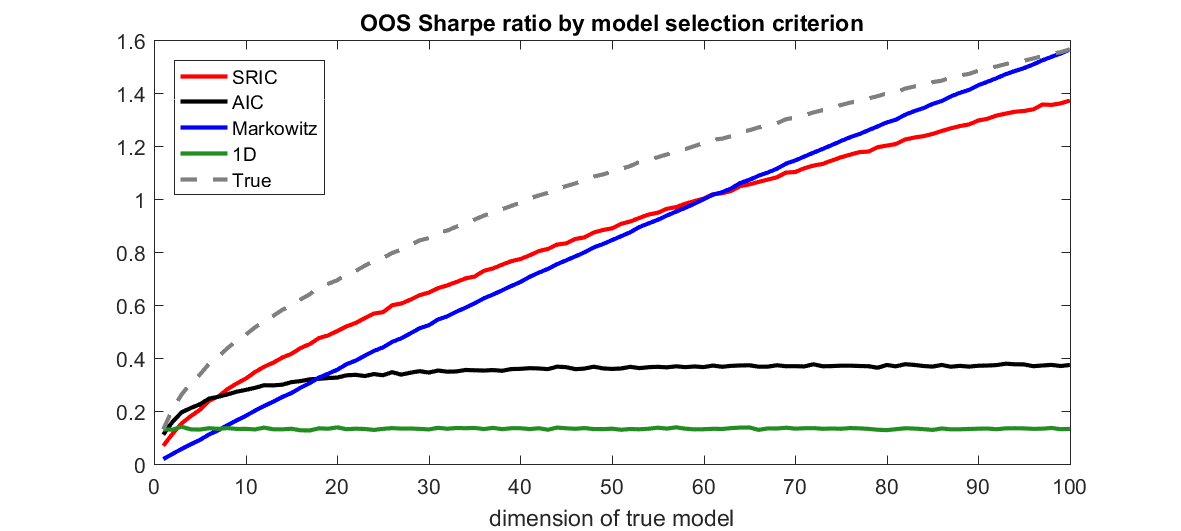}
		\flushleft

	{\textbf{Figure 4:} Average out-of-sample Sharpe ratio depending on dimension of true model for different selection criteria. Averages over $10,000$ trials. }
	\label{fig_vary_dim}
\end{figure}

Figure 4 shows the corresponding average (over $10,000$ draws) out-of-sample Sharpe ratios for different dimensions of the true model by selection criterion.

For comparison, the gray dashed line shows the out-of-sample Sharpe ratio of the true model. Of course the true model is unknown and its Sharpe ratio therefore unattainable. 

As can be seen, SRIC performs well over the full range. It tracks the shape of the performance of the unknown true models. For true dimensions between $7$ and $61$ it has the highest out-of-sample performance among all model selection criteria, that is SRIC, AIC, the $1D$ model and the full model. 

Outside that range, selection criteria which are biased towards the true dimension perform better. When the dimension of the true model is high ($\geq 62$) the full model has the highest out-of-sample Sharpe ratio; but of course as the dimension of the true model is unknown so is the fact that the full model is optimal. When the true dimension is low, the $1D$ model or AIC, both which are biased towards choosing a low dimension, outperform the full model and selection via SRIC. 

Still, SRIC ranks at least second best, except for $k^*=1$ where it ranks behind AIC and the $1D$ model. Most notably, unlike AIC, it benefits from predictability in higher dimensions.

\subsection{Real Data: 10 Industry Portfolios}
\label{applications_industry}
%In the last subsections we showed the value of SRIC on simulated data. We now illustrate its value on an empirical data set.  The advantage lies in more realism, the disadvantage in a limited and noisy dataset. 

We now illustrate SRIC as model selection criterion for a trend system on the 10 industry portfolios on Kenneth French's website\footnote{\url{http://mba.tuck.dartmouth.edu/pages/faculty/ken.french/ftp/10_Portfolios_Prior_1_0_CSV.zip}}.
We use daily data from 2nd January 1963 to 29th July 2016. First, we subtract the Fed fund rate\footnote{Effective fund rate, daily series, obtained from Federal Reserve \url{http://www.federalreserve.gov}} to get excess returns.  At the beginning of each month we do the following. We look back $lb=1,\ldots,120$ months and compute the principal components on daily data to obtain $10$ factors. To be exact, we choose the equally weighted portfolio as the first factor (rather than the first component of PCA) and then compute the remaining principal components on the space orthogonal to the first factor. We do it this way in order to interpolate between equal weights and the Markowitz portfolio.

By  model $k=1,\ldots,10$ we denote the portfolio that maximizes the in-sample Sharpe ratio on the last $lb$ months over the first $k$ factors $1,\ldots,k$. This way the models $k$ interpolate between the equally weighted portfolio ($k=1$) and the full Markowitz portfolio ($k=10$).  At the beginning of each month, we let SRIC and AIC choose the model $k$ and apply the weights for the subsequent month.  For this, we always scale the portfolio to $10\%$ annualized volatility on the in-sample data (that is the last $lb$ months). We scale to constant volatility because we are concerned with portfolio choice in the \textit{cross section} and do not attempt to \textit{time} the market by having time varying risk exposure. This way we get a series of rolling out-of-sample returns from 1st January 1973 to 29th July 2016.

We vary the lookback $lb$ over which the portfolios are formed. While on longer time scales there might be only one rewarded factor, the market, on shorter time horizons there might be several industry-specific trends. Hence depending on the lookback the dimension of the best-performing model might vary.

\begin{figure}[H]
	\centering
		\includegraphics[width=18cm]{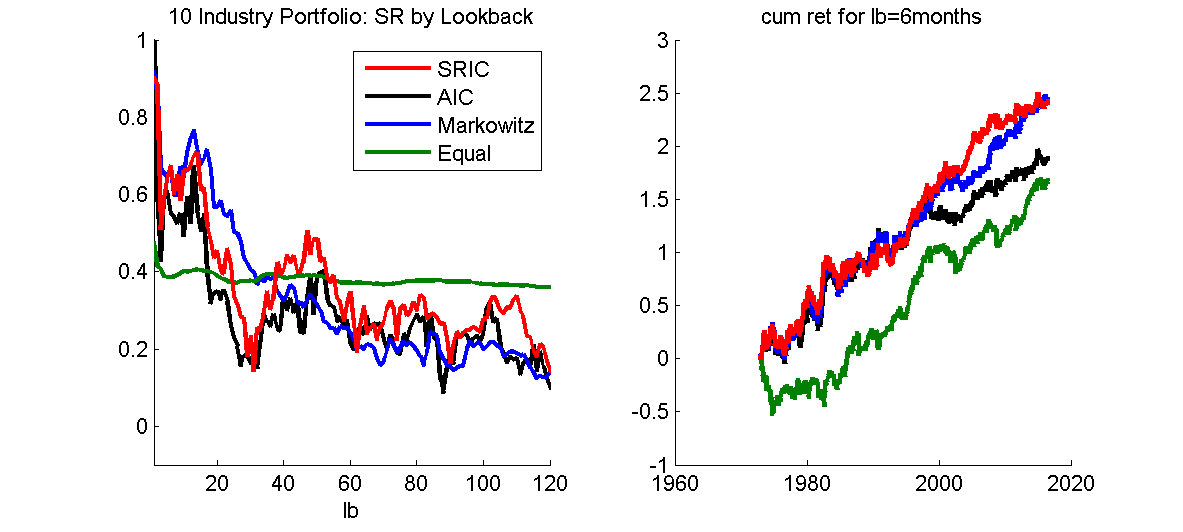}
		\flushleft
	%%\caption*
	{\textbf{Figure 5 (10 Industry Portfolios):} Left: Sharpe ratio for lookback of 1 to 120 months for different model selection criteria. Right: Cumulative Returns for a lookback of 6 months. The Sharpe ratios are $0.56$, $0.44$, $0.56$ and $0.39$ for selection by SRIC, AIC, Markowitz and Equal Weights.}
	\label{fig_10industry}
\end{figure}

The results are illustrated in Figure 5. The left panel shows the out-of-sample Sharpe ratio for the equally weighted portfolio, the Markowitz portfolio and portfolio choice according to SRIC and AIC depending on the lookback. The right panel shows the equity curves for a lookback of $6$ months. Several things are noteworthy. i) For a long lookback of $lb=36$ months or more, the equally weighted portfolio outperforms the Markowitz portfolio. ii) For short lookbacks of up to $24$ months, however, the Markowitz portfolio outperforms the equally weighted portfolio. 
iii) SRIC dominates AIC as model selection criterion and performs better than the equally weighted portfolio for short lookbacks and better than the Markowitz portfolio for long lookbacks. 

Knowing the results in Figure 5, one would choose the equally weighted model for long lookbacks and the full model for shorter lookbacks. A priori, however, this is unknown and a model selection criterion like AIC or SRIC is needed. 

%\begin{figure}[H]
	%\centering
		%\includegraphics[width=18cm]{fig_10ind}
		%\flushleft
	%%%\caption*
	%{\textbf{Figure 7 (10 Industry Portfolios):} Left: Sharpe ratio for lookback of 1 to 120 months for different model selection criteria. Right: Cumulative Returns for a lookback of 6 months.}
	%\label{fig_10industry}
%\end{figure}

Note that the Markowitz portfolio (we admit though that the outperformance is due to the first part of the sample) performed better than the equally weighted portfolio for shorter lookbacks, i.e. for lower estimation horizons. This might seem surprising. Namely with a shorter lookback estimation error is higher and hence one could assume that a lower dimensional model like equal weights would work better. Therefore estimation error cannot be the full answer to the question why the equally weighted portfolio outperforms the Markowitz portfolio on many data sets and parameter combinations as for instance for lookback horizons of $3$ years and more. 

Estimation error is only half of the story. The other is that on those data sets most of the return opportunity is already captured by the equally weighted portfolio. In this example, on horizons of $5$ to $10$ years, the return opportunities are concentrated in one component (equal weights). Any additional dimension does not add enough return potential to justify the additional estimation error.  

On shorter horizons, on the other hand, the return opportunities are spread out across more portfolio directions. That is there are time varying industry specific trends on horizons up to $2$ years. Therefore, with shorter estimation windows, one gets something of sufficient value in return for the extra estimation error, while for longer windows one does not.

The question of naive benchmarks versus portfolio optimization is therefore less a question of estimation error rather than of the exact trade-off between estimation error and return opportunities. Theorem \ref{theorem_linear_case_nf_me} prices this trade-off in terms of the Sharpe ratio.

%\subsection{Model Selection II (Regression)}
\subsection{Simulation 4: Regression}
\label{applications_reg1}
In the previous subsections we illustrated SRIC in a portfolio choice context where the task was to combine $k$ return streams optimally. Here we illustrate the use of SRIC in a regression context. 

Note that by the reasoning in Remarks \ref{example_regression} and \ref{example_leastsquares} there is no mathematical difference between the two. Hence the following two subsections are more for completeness and as a proof of concept.  In addition, we use them to make the following point: The true model might be suboptimal due to estimation error in its parameters and might be outperformed by a less complex model.     

For this, we create a simulation in which there are decreasing marginal benefits of additional dimensions. In practice, this can be because the predictability of each additional dimension decreases or because the additional dimensions are correlated with the existing ones so that part of their information is already contained in lower dimensions. 

Here is what we do: We draw $30$ normally distributed random variables $x_{i}$, $i=1,\ldots,30$ with pairwise correlation $\rho=0.2$, mean zero and variance $1$. We draw $1260$ independent samples of $x$ and collect them in a $1260$ times $30$ dimensional matrix $X=\bra{X_{t,i}}$. This corresponds to $T=5$ years with $252$ business days each. We then simulate $1260$ daily returns  $y\in \R^{1260,1}$ that can be predicted by $X$. That is we use the model
$$y_t = X_t \beta + \eps_t $$
where $\eps_t$ are iid normal with zero mean and standard deviation $0.1/\sqrt{252}$ which corresponds to an annualized volatility of $10\%$ and $\beta=b (1,\ldots,1)^T$ with $b$ chosen such that each individual predictor $X_i$ ($i=1,\ldots,20$)  has a true Sharpe ratio\footnote{That is $b=0.5/\sqrt{252} \cdot 0.1/\sqrt{252} \cdot f$ with $f=N/(1^T C 1)$, where $C$ is the correlation matrix of $X$  and $N=30$. Hence $b\approx 2.9178 \cdot 10^{-5}$.} of $0.5$. Model $k$ is then to regress on the first $k$ predictors $X_{t,i}$, $i=1,\ldots,k$ as in Example~\ref{example_regression}. That is we look at the return streams $r_{t,i}$ of betting with weights $X_{t,i}$ on $y_t$:
$$r_{t,i} = y_t X_{t,i}.$$
We then derive the optimal in-sample portfolio $\hat \theta$ of returns $r_i$, $i=1,\ldots,k$ and look at their out-of-sample Sharpe ratio depending on the chosen number of predictors $k$. 

\begin{figure}[H]
	\centering
		\includegraphics[width=7cm]{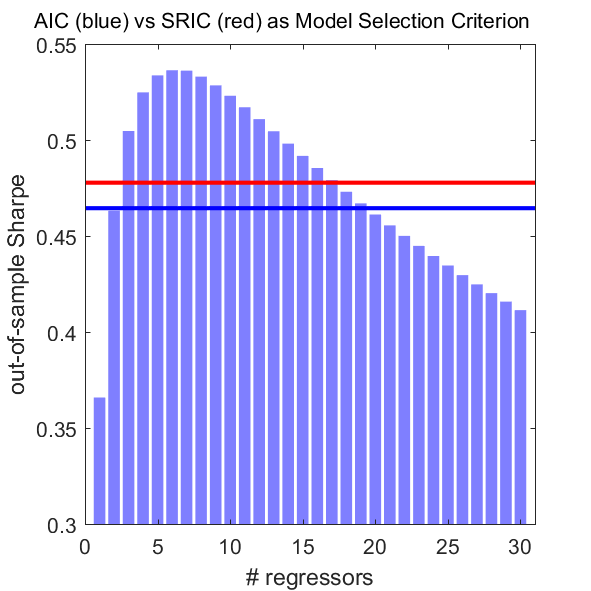}
		\flushleft
	%%\caption*
	{\textbf{Figure 6 (Regression Simulation):} Average out-of-sample Sharpe ratio by dimension of regression model. The red line denotes the average out-of-sample Sharpe ratio using SRIC, the blue line using AIC. Averages over $100,000$ trials.}
	\label{fig_reg_sim}
\end{figure}

Figure 6 shows the result obtained by averaging over $100,000$ such trials. Note that even though the full $30$-dimensional model is the true model, the out-sample Sharpe ratio peaks at choosing the $6$ dimensional model. This is as the marginal benefit of choosing additional predictors is outweighed by the additional estimation risk. The red line shows the average out-of-sample Sharpe ratio obtained by using SRIC as criterion and the blue line the one obtained by using AIC. Note that both, AIC and SRIC, achieve higher Sharpe ratios than using the true, $30$-dimensional model. For the record, AIC achieves a higher average mean variance utility of $-0.64$ compared to SRIC with $-2.1$.

\subsection{ A Toy Carry Strategy}
\label{applications_reg2}
The previous example was a stylized simulation designed to show the benefit of SRIC.  We now illustrate its applicability in a toy real world application dealing with a simple carry trade strategy. For the avoidance of doubt, this simple example is created for illustration only, and we would not recommend using this strategy. We also do not care about actual out-of-sample performance as for any particular data-set and model, this is quite random anyway. We add this section as a proof of applicability in a real world regression context. 

For this, we use spot and forward prices for $20$ different currencies from January 2000 to October 2015. Each month we rebalance our portfolio consisting of $k+1$ base strategies.
  
The first base strategy is a $12$-month trend following strategy, where the weight is simply the 12-month moving average return (divided by the market's return variance). The other strategies are carry strategies. More precisely: the second base strategy is to set the weight equal to the average last 12-month interest rate differential versus the US-Dollar (divided by the market variance). The interest rate differential is the difference between the (log) spot and the (log) forward price. We call this the \textit{12-month carry strategy}. The third base strategy sets the weight equal to the current interest rate differential (divided by the current market variance). We call this the \textit{lag-$0$ strategy}. The fourth strategy sets the weight equal to the $1$-month lagged interest differential (\textit{lag-$1$ strategy}), the fifth one uses the $2$-month lagged interest rate differentials.  A model with dimension $k$ is now a combination of the $12$-month trend strategy, the $12$-month carry strategy and a number of lagged interest rate differential strategies. Including lagged interest rate differentials allows the model to use changes of interest rates as predictors.

\begin{figure}[H]
	\centering
		\includegraphics[width=18cm]{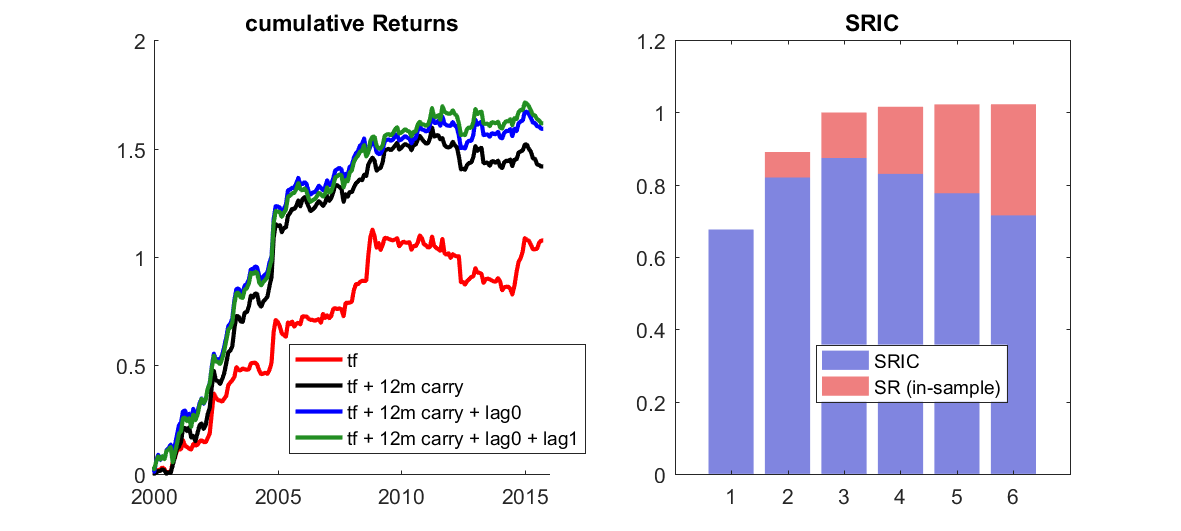}
		\flushleft
	%%\caption*
	{\textbf{Figure 7:} Left: cumulative returns for different in-sample strategies. Right: in-sample Sharpe ratios and their corrections for noise fit and estimation error (SRIC) for varying numbers of base strategies. }
	\label{fig_example_carry}
\end{figure}

By building the portfolio of base strategies which maximize the in-sample Sharpe ratio, we are essentially performing a regression of the currency returns on the predictions implied by the base strategies (factors) as outlined in Example \ref{example_regression}, equally weighting all currencies.
\begin{eqnarray*}
&& r^j_{t+1} = X \beta  + \eps^j_{t+1} \\
&& \text{with } X= \brae{\operatorname{ma}(r^j_t, 12),\operatorname{ma}(i_t^j-i_t^{\text{\$}}, 12), i_t^j-i_t^{\text{\$}}, i_{t-1}^j-i_{t-1}^{\text{\$}},\ldots} 
\label{eq_exampele_carry}
\end{eqnarray*} 
where $r^j$ is the the excess return of currency $j$,  $i^j$ the interest rate and $ma$ denotes the moving average. 

It is intuitive that including the current interest rate differential adds value but that the additional information provided by including more and more lags diminishes and will at some point be outweighed by the cost in terms of overfitting (noise fit and estimation error). The SRIC derived in the previous section can now serve as a model selection criterion. We illustrate this in Figure 7. The left hand side shows the cumulative returns of the optimal in-sample combination of the $12$-month trend strategy, the $12$-month carry strategy and carry strategies with up to $2$ lags. The right hand side shows the in-sample Sharpe ratio for strategies combining the $12$-month trend strategy with $12$-month carry strategy and up to $4$ lagged carry strategies. The in-sample Sharpe ratio increases with the number of parameters, but after correcting for noise fit and estimation error (SRIC) it reaches a peak after inclusion of lag $0$. That is SRIC would recommend a strategy which combines the trend base strategy with $12$-month carry strategy and the lag-$0$ carry strategy.

\color{black}
\section{Extensions}
%%%%%%%%%%%%%%%%%%%%%%%%%%%%%%%%%%%%%%%%%%%%%%%%%%%%%%%%%%%%%%%%%%%%%%%%%%%%%%%%%%%%

\label{section_extensions}
There are several directions into which this article can be extended.

First, this article is restricted to the case in which the returns of the strategy depend linearly on the parameter: $\mu(\theta)=\mu^T\theta$. %In our earlier working paper (\cite{paulsen2016noise}) we relaxed this assumption. 
Under suitable regularity assumptions all results are still asymptotically true for non-linear relationships $\mu(\theta)$. Namely, for large times $T$, the estimated parameter will converge to the true parameter. Around the true parameter, the situation will, thanks to the regularity assumptions, be approximately linear and we are back in the case of this article. This, together with control of large (but unlikely) deviations will prove that SRIC is an asymptotically (of order $o(1/T)$) unbiased estimator of the out-of-sample Sharpe ratio in a more general case of non-linear dependencies.

%Second, it would be nice to incorporate \textit{transaction cost} and estimate the out-of-sample Sharpe ratio net of transaction cost. In the easiest case, the transaction cost are just a constant (independent of the parameter $\theta$). In that case, $\rho$ and $\tau$ as defined in the paper would be the Sharpe ratios gross of transaction cost, while $\rho-c$ and $\tau-c$ would be the Sharpe ratios net of transaction cost. Hence, with constant transaction cost the results of this paper are valid. Note that holding the true net Sharpe ratio  $\tau^* - c$ fixed, higher transaction cost $c$ (e.g. due to a shorter trading horizon) are beneficial as they tend to reduce noise fit and estimation error. The intuition is that with a fixed net Sharpe ratio the gross Sharpe ratio is larger when transaction cost are higher. This implies a more beneficial signal to noise ratio that facilitates parameter estimation. 

Second, in this paper, we chose to consider a setting with Gaussian noise. However, all results apply asymptotically with \textit{non-Gaussian noise} if the noise has sufficiently bounded higher moments. 

Third, it would be interesting to consider the case in which the \textit{covariance $\Sigma$ is only estimated} and not known. A natural modeling choice would be that $\Sigma_{\text{true}} \Sigma^{-1}_{\text{estimated} }$ is Wishart-distributed. One can then look at the additional loss of going from $\hat \tau(\Sigma_{\text{estimated} })$ to  $\hat \tau(\Sigma_{\text{true} })$ where $\hat \tau (\Sigma)$ denotes the out-of-sample Sharpe ratio of parameter $\hat \theta$ when the true covariance is $\Sigma$.

\section{Conclusion}
\label{section_conclusion}

In this paper, we derived an unbiased estimator of the out-of-sample Sharpe ratio when the in-sample Sharpe ratio is optimized over $k$ parameters. The estimator, which we call SRIC, is a closed form correction of the in-sample Sharpe ratio for noise fit and estimation error. 

We then showed how to apply SRIC as model selection criterion and interpreted it as analogue to the Akaike Information Criterion with the Sharpe ratio rather than log-likelihood (respectively mean-variance utility) as metric for model fit. While model selection via AIC leads to higher out-of-sample mean-variance utility, model selection by SRIC leads to higher out-of-sample Sharpe ratios.  

Several toy applications illustrated its applicability. SRIC is useful whenever estimating the Sharpe ratio net of noise fit and estimation error is of interest, be it as estimator for the out-of-sample Sharpe ratio, or when selecting between models of different dimensions such as factors, asset weights or parameters in prediction models. This makes SRIC particularly applicable within portfolio management. 

%\newpage
%Let $\nu\in \R^2$ be a 2d- standard normal random variable, let $\mu\in \R^2$ be a constant vector. 
%Show that 
%\begin{eqnarray}
%&&\E\brae{\frac{\norm{\mu+\nu }^2 -1 - \mu^T\bra{\mu+\nu}}{\norm{\mu+\nu }} } \\
%&=&  \E\brae{\frac{  \nu^T\bra{\mu+\nu}-1}{\norm{\mu+\nu }} }\\
%&\stackrel{!}{=}&0
%\end{eqnarray}
%\newpage
	
	\pagebreak
	
	\section{Appendix}

%\subsection{Exactness of SRIC for $\tau^*=0$}
%
%\begin{lemma} Let a $X \stackrel{\mathcal D}{=} \chi(k)$  then $\E\brae{X-\frac{k-1}{X}}=0$. In particular for any $T>0$ and $Z\stackrel{\mathcal D}{=} \frac{1}{\sqrt{T}} \chi(k+1)$ holds $\E\brae{Z-\frac{k}{T Z}}=0$.
%\end{lemma}
%\begin{proof}
%As is well known, $X$ has the density $f(x)=2^{1-\frac{k}{2}} x^{k-1} e^{-\frac{x^2}{2}} \Gamma\bra{\frac{k}{2}}^{-1}$ and
%$ \E\brae{X} = 2^{\frac{1}{2}} \Gamma\bra{\frac{k+1}{2}}  \Gamma\bra{\frac{k}{2}}^{-1} $. 
%Hence 
%\begin{eqnarray*}
%\E\brae{\frac{1}{X}} &=& \int_0^{\infty} \frac{1}{x} f(x) dx\\
%&=&   \int_0^{\infty} 2^{1-\frac{k}{2}} x^{k-2} e^{-\frac{x^2}{2}} \Gamma\bra{\frac{k}{2}}^{-1} dx\\
%&=& 2^{-\frac{1}{2}} \Gamma\bra{\frac{k-1}{2}} \Gamma\bra{\frac{k}{2}}^{-1}  \int_0^{\infty} 
%2^{1-\frac{k-1}{2}} x^{k-2} e^{-\frac{x^2}{2}} \Gamma\bra{\frac{k-1}{2})^{-1}} dx\\
%&=&2^{-\frac{1}{2}} \Gamma\bra{\frac{k-1}{2}}  \Gamma\bra{\frac{k}{2}}^{-1}
%\end{eqnarray*}
%Hence
%\begin{eqnarray*}
%\E\brae{ \frac{k-1}{X}} &=& 2^{\frac{1}{2}} \frac{k-1}{2}\Gamma\bra{\frac{k-1}{2}}  \Gamma\bra{\frac{k}{2}}^{-1}\\
%&=& 2^{\frac{1}{2}} \Gamma\bra{\frac{k+1}{2}}  \Gamma\bra{\frac{k}{2}}^{-1}\\
%&=& \E\brae{X}
%\end{eqnarray*}
%\end{proof}
%\label{section_appendix_exactness_tau0}

%\red{TO BE DONE}

%%%%%%%%%%%%%%%%%%%%%%%%%%%%%%%%%%%%%%%%%%%%%%%%%%%%%%%%%%%%%%%%%%%%%%%%%%%%%%%%%%%%%%%%%
%\subsection{Proof Linear Case}
%%%%%%%%%%%%%%%%%%%%%%%%%%%%%%%%%%%%%%%%%%%%%%%%%%%%%%%%%%%%%%%%%%%%%%%%%%%%%%%%%%%%%%%%%

\subsection{Proof of Theorem \ref{theorem_linear_case_nf_me} }
\begin{proof}[Proof of Theorem \ref{theorem_linear_case_nf_me}]
 We have to show 
\begin{equation*}
\E\brae{\hat \rho -\hat \tau } = \E\brae{\frac{k}{T \hat \rho} }.
\end{equation*} 
Now by quadratic optimization we have that $\hat \theta =  \Sigma^{-1} \bra{\mu +\nu}$ maximizes the in-sample Sharpe ratio and $\theta^*=  \Sigma^{-1} \mu$ maximizes the out-of-sample Sharpe ratio.  Therefore,
\begin{eqnarray}
\rho(\hat \theta)  &=& \norm{\mu+\nu}_{\Sigma^{-1} } \label{equation_hatrho} \text{ and} \\
\tau( \theta^*) &=& \norm{\mu}_{\Sigma^{-1} }
\end{eqnarray}
with $\norm{x}_{\Sigma^{-1} }=\sqrt{x^T\Sigma^{-1}{x}}$.
 
Plugging $\hat \theta$ into $\tau$ leads to the out-of-sample Sharpe ratio $\hat \tau$
\begin{eqnarray}
\hat \tau  &=& \frac{\mu^T \Sigma^{-1}  \bra{\mu +\nu} }{ \norm{\mu+\nu}_{\Sigma^{-1} }}.
\label{equation_hattau}
\end{eqnarray}
Using (\ref{equation_hatrho}) and (\ref{equation_hattau}), we have to show
\begin{equation*}
\E\brae{  \frac{\nu^T\Sigma^{-1} \bra{\mu+\nu} }{\norm{\mu+\nu}_{\Sigma^{-1} }} } = \E\brae{\frac{k}{T  \norm{\mu+\nu}_{\Sigma^{-1} }} }.
\end{equation*}
Without loss of generality (after a reparametrization) we can assume that $\Sigma=I_{k+1}$, the identity matrix. We then have to show
\begin{equation*}
\E\brae{  \frac{\nu^T\bra{\mu+\nu} }{\norm{\mu+\nu} }}  = \E\brae{\frac{k}{T  \norm{\mu+\nu}} }.
\end{equation*}
We write
 \begin{eqnarray*}
\E\brae{  \frac{\nu^T \bra{\mu+\nu} }{\norm{\mu+\nu}} } 
&=& \sum\limits_{i=1}^{k+1}  \E\brae{  \frac{\nu_i  \bra{\mu_i+\nu_i} }{\norm{\mu+\nu}} } =\sum\limits_{i=1}^{k+1}  A_i.
\end{eqnarray*}
Now integrating out $\nu_i$, 
\begin{eqnarray*}
A_i &=& \E\brae{  \frac{\nu_i  \bra{\mu_i+\nu_i} }{\norm{\mu+\nu}} }\allowdisplaybreaks[1]\\
&=&\E\brae{ \int_{-\infty}^{\infty} \frac{\nu_i  \bra{\mu_i+\nu_i} }{\norm{\mu+\nu}}\sqrt{\frac{T}{2 \pi} } e^{-T \nu_i^2/2} d\nu_i   } \text{ where expectation is over $\nu_j$, $j\not = i$}  \allowdisplaybreaks[1]\\
&=&  \E\brae{ \int_{-\infty}^{\infty}  \frac{1}{T} \sqrt{\frac{T}{2 \pi} }  g(\nu_i) f'(\nu_i) d\nu_i   }
\end{eqnarray*}
with 
\begin{eqnarray*}
f(\nu_i) &=& - e^{-T \nu_i^2/2 } \quad \text{ and }\\
g(\nu_i) &=& \frac{\nu_i + \mu_i}{\norm{\mu+\nu} }
\end{eqnarray*}
so that
\begin{eqnarray*}
f'(\nu_i) &=& \nu_i T  e^{-T \nu_i^2/2 }\\
g'(\nu_i) &=& \frac{1}{\norm{\mu+\nu} } - \bra{\nu_i + \mu_i}\frac{1}{\norm{\mu+\nu}^3}\bra{\nu_i + \mu_i}\\
&=&  \frac{\sum\limits_{j\not =i} \bra{\mu_j +\nu_j}^2 }{\norm{\mu+\nu}^3}.
\end{eqnarray*}
Hence, via integration by parts
\begin{eqnarray*}
A_i &=&  \E\brae{ \int_{-\infty}^{\infty}  \frac{1}{T} \sqrt{\frac{T}{2 \pi} }  g'(\nu_i) \bra{-f(\nu_i) }d\nu_i   } \\
&=& \frac{1}{T} \E\brae{ \int_{-\infty}^{\infty}  \sqrt{\frac{T}{2 \pi} } \frac{\sum\limits_{j\not =i} \bra{\mu_j +\nu_j}^2 }{\norm{\mu+\nu}^3}
e^{-T \nu_i^2/2} d\nu_i   } \\
&=&\frac{1}{T} \E\brae{  \frac{\sum\limits_{j\not =i} \bra{\mu_j +\nu_j}^2 }{\norm{\mu+\nu}^3} } \text{ where expectation now is over $\forall j$ again.}
\end{eqnarray*}
By symmetry, we have
\begin{eqnarray*}
\sum\limits_{i=1}^{k+1}  A_i &=& \frac{1}{T} \sum\limits_{i=1}^{k+1} \E\brae{  \frac{\sum\limits_{j\not =i} \bra{\mu_j +\nu_j}^2 }{\norm{\mu+\nu}^3} } \\
&=& \frac{1}{T} \E\brae{ \frac{k \norm{\mu+\nu}^2}{\norm{\mu+\nu}^3}} \\
&=& \frac{1}{T} \E\brae{ \frac{k }{\norm{\mu+\nu}}}
\end{eqnarray*}
what we had to show.

\end{proof}

%\color{red}
\subsection{Proof of Theorem \ref{theorem_nf_ee_split} }

\begin{proof}[Proof of Theorem \ref{theorem_nf_ee_split}] Without loss of generality we can assume $\Sigma=I_{k+1}$, the identity matrix. 
We have that $\hat \rho = \norm{\hat \mu}=\sqrt{\bra{\mu+\nu}^T\bra{\mu+\nu}}$. We now expand $f(\nu)= \sqrt{\bra{\mu+\nu}^T\bra{\mu+\nu}}$ in a Taylor series. 
We have
\begin{eqnarray*}
f(0) &=& \norm{\mu} \\
Df(0)(h) &=& \norm{\mu}^{-1} \bra{\mu^T h}\\
D^2f(0)(h,w) &=&  -\norm{\mu}^{-3} \bra{\mu^Th} \bra{\mu^Tw} +\norm{\mu}^{-1} \bra{w^T h},
\end{eqnarray*}
where we used the assumption $\tau^*>0$ so that $\norm{\mu}>0$.
Hence
\begin{eqnarray*}
f(\nu) &=& \norm{\mu} + \norm{\mu}^{-1} \bra{\mu^T \nu} - \frac{1}{2} \norm{\mu}^{-3} \bra{\mu^T\nu}^2  +\frac{1}{2}\norm{\mu}^{-1} \norm{\nu}^2 +o(\norm{\nu}^2).
\end{eqnarray*}
Taking expectations\footnote{Note that the $o$-term is controlled by its $L^p$-norm and the concentration inequality for Gaussian random variables so that $\E\brae{o(\norm{\nu}^2)}=o(T^{-1})$. See our proof of Theorem \ref{theorem_uncertainties} for a detailed proof of a similar statement. } gives%\footnote{Note that the remainder is controlled by the third derivative and the concentration inequality for Gaussian random variables. } 
\begin{eqnarray*}
\E\brae{\hat \rho} &=&\E\brae{f(\nu)}\\
 &=& \norm{\mu} +0 - \frac{1}{2}\frac{1}{\norm{\mu} T} + \frac{1}{2} \frac{k+1}{\norm{\mu} T} + o\bra{T^{-1}}\\
&=& \norm{\mu} + \frac{1}{2}\frac{k}{ \norm{\mu} T} + o\bra{T^{-1}}\\
&=& \tau^*+\frac{1}{2}\frac{k}{\tau^* }\frac{1}{T} + o\bra{T^{-1}}.
\end{eqnarray*}
%Now as $\E\brae{\hat \rho} \to \tau^*$ we can substitute $\tau^*$ to obtain
A first order Taylor development of $\hat\rho^{-1}$ similar to the Taylor development of $f(\nu)=\hat\rho$ above yields that $\E\brae{1/\hat\rho}\to 1/\tau^*$. This allows to substitute $1/\tau^*$ by $\E\brae{1/\hat\rho}$ in the above expression and we obtain

%A first order Taylor development of $\|\mu\|^{-1}$ similar to the Taylor development of $\|\mu\|$ above yields that $1/\tau^*=\mathbb{E}[1/\hat\rho]+o(1)$. This allows to substitute $1/\tau^*$ by  $\mathbb{E}[1/\hat\rho]$ in the above expression and we obtain
\begin{eqnarray*}
\E\brae{\hat \rho}&=&\tau^*+\frac{1}{2}\frac{k}{\tau^* T} + o\bra{T^{-1}} \\
&=&\tau^*+\E\brae{\frac{1}{2}\frac{k}{\hat \rho T}} + o\bra{T^{-1}}
\end{eqnarray*} and therefore that
\begin{equation*}
\hat \rho-\frac{1}{2}\frac{k}{\hat \rho T}
\end{equation*} is an estimator for $\tau^*$ asymptotically unbiased of order $T^{-1}$. As we already know by the previous theorem that 

\begin{equation*}
\hat \rho-\frac{k}{\hat \rho T}
\end{equation*} is an unbiased estimator for $\hat \tau$ the claimed splitting is proven.
\end{proof}
%\color{black}

\subsection{Proof of Theorem \ref{theorem_uncertainties} }
\begin{proof}[Proof of Theorem \ref{theorem_uncertainties}]
Without loss of generality (after reparametrization of $\theta$) let $\Sigma=I_{k+1}$, the identity matrix.
Then by equation (\ref{equation_hatrho}) and (\ref{equation_hattau}) we have
\begin{eqnarray*}
\hat \rho - \hat \tau &=& \frac{\nu^T (\mu + \nu)}{\norm{\mu+\nu}} \leq \norm{\nu}
\end{eqnarray*} with equality when $\mu=0$. 

 If $\mu\not =0$ we have by Tayloring $f(\nu)=\norm{\mu+\nu}^{-1}$ at $\nu=0$ and, again without loss of generality, assuming that $\mu=\norm{\mu} e_1$ is a multiple of the first basis vector:
\begin{eqnarray*}
\hat \rho - \hat \tau &=& \frac{\nu^T (\mu + \nu)}{\norm{\mu+\nu}} \\
&=& \nu^T (\mu + \nu) \bra{\norm{\mu}^{-1} - \norm{\mu}^{-3} \mu^T \nu + o(\norm{\nu})} \\
&=& \norm{\mu}^{-1} \bra{\nu^T (\mu + \nu)  - \norm{\mu}^{-2} \bra{\mu^T \nu}^2  } +R \\
&=& \norm{\mu}^{-1} \bra{\nu^T\nu-\bra{\nu^Te_1}^2 + \norm {\mu}\nu^T e_1 }+R \\
&=& \frac{1}{T \norm{\mu}} \underbrace{T\bra{ \nu^T\nu- \bra{\nu^Te_1}^2 }}_{Z}+ \frac{1}{\sqrt{T}} \underbrace{\sqrt{T} \nu^T e_1}_{N} +R\\
&=& \frac{1}{T \norm{\mu}} Z+ \frac{1}{\sqrt{T}} N +R
\end{eqnarray*}
%with $Z $ being $\chi^2(k)$ distributed,  $N$ an independent standard normal distributed random variable and $R= \nu^T (\mu + \nu) o(\norm{\nu}) - \nu^T \nu \norm{\mu}^{-3} \mu^T \nu$ so that $\E\brae{T\abs{R}^p}\to 0$ for all $p\geq 1$ when $T\to \infty$.
with $Z$ being $\chi^2(k)$ distributed,  $N$ an independent standard normal distributed random variable and $R= \nu^T (\mu + \nu) o(\norm{\nu}) - \nu^T \nu \norm{\mu}^{-3} \mu^T \nu$. From the above display we see that for all $q\geq1$ the moment $E\brae{\abs{R}^q}$ is uniformly bounded for $T\geq1$.

It remains to show that $\E\brae{T\abs{R}^p}\to 0$ for all $p\geq 1$ when $T\to \infty$. Let $\eps>0$. For a small enough $\delta>0$ holds $\E\brae{T \abs{R}^p; \norm{\nu}\leq \delta }<\eps$.
Now for any $B>1$ we have
\begin{eqnarray*}
\E\brae{T \abs{R}^p} &=&  \E\brae{T \abs{R}^p; \norm{\nu}\leq \delta} +\E\brae{T \abs{R}^p; \norm{\nu} > \delta} \\
&\leq& \eps + T \E\brae{ \abs{R}^p; \{\norm{\nu} > \delta \} \cap \{ R<B\} }+ T \E\brae{ \abs{R}^p; \{\norm{\nu} > \delta \} \cap \{ R\geq B\} }\\
&\leq& \eps+ T B^p \P\brae{\norm{\nu} > \delta}+ T \E\brae{ \abs{R}^{2 p} B^{-p};  \{ R\geq B\} } \\
&\leq&  \eps+ T B^p \P\brae{\norm{\nu} > \delta} + T B^{-p} \E\brae{ \abs{R}^{2 p}}
\end{eqnarray*}

Hence if we choose $B=\P[\norm{\nu} > \delta]^{-\frac{1}{2p}}$,
we have
\begin{eqnarray*}
\E\brae{T \abs{R}^p} &\leq& \eps +  (1+\E\brae{ \abs{R}^{2 p}}) T \P\brae{\norm{\nu} > \delta}^{\frac{1}{2}} \\
&\leq& \eps +  C T e^{-a T} \quad \text{for some $a,C>0$}\\
&\leq& 2 \eps \quad \text{for large $T$}
\end{eqnarray*}
where in the second inequality we used the tail properties of the normal distribution $\nu \sim N(0, \frac{1}{T}I_{k+1})$ and that $\E\brae{ \abs{R}^{2 p}}$ is uniformly bounded in $T$.
\end{proof}

\subsection{Proof of Theorem \ref{theorem_linear_case_squared_nf_me} }
\begin{proof}[Proof of Theorem \ref{theorem_linear_case_squared_nf_me}]

Simple quadratic optimization shows that $\hat \theta =\frac{1}{\gamma} \Sigma^{-1 }\bra{\mu+\nu}$ maximizes in-sample utility $\hat u$, while $\theta^* =\frac{1}{\gamma} \Sigma^{-1 }\mu$ maximizes out-of-sample utility $u$.  With this $\hat u (\hat \theta)=\frac{1}{\gamma} \hat \rho^2$ as well $u ( \theta^*)=\frac{1}{\gamma} {\tau^*}^2$.

Now it is straightforward to see that for mean-variance utility

\begin{eqnarray*}
\E\brae{\N_{MV}} &=& \E\brae{\hat u(\hat \theta)}- \E\brae{\hat u(\theta^* \phantom{\hat \theta}  )}\\
&=& \E\brae{ \frac{1}{\gamma} \rho(\hat \theta)^2}- \frac{1}{\gamma} \tau(\theta^*)^2 +0\\
&=& \frac{1}{\gamma} \E\brae{ \norm{\mu+\nu}^2_{\Sigma^{-1} } -\norm{\mu}^2_{\Sigma^{-1} } }\\
&=& \frac{1}{\gamma} \E\brae{ 2\mu \Sigma^{-1} \nu + \nu \Sigma^{-1}\nu }\\
&=& \frac{k+1}{\gamma T}.
\end{eqnarray*}

Similarly 
\begin{eqnarray*}
\E\brae{\N_{MV} + \EE_{MV}+\U_{MV}} &=& \E\brae{\hat u(\hat \theta)}- \E\brae{ u(\hat\theta)}\\
&=& \frac{1}{\gamma} \E\brae{ \norm{\mu+\nu}^2_{\Sigma^{-1} } - \bra{2 \mu^T \Sigma^{-1} \bra{\mu +\nu}- \norm{\mu+\nu}^2_{\Sigma^{-1} } }}\\
&=& \frac{2 (k+1)}{\gamma T}
\end{eqnarray*}
The simplicity of the proof is due to the beauty of the mean-variance utility and squared entities: its geometry is linear. 

\end{proof}

%%%%%%%%%%%%%%%%%%%%%%%%%%%%%%%%%%%%%%%%%%%%%%%%%%%%%%%%%%%%%%%%%%%%%%%%%%%%%%%%%%%%%%%%
%\subsection{Proof Log-Likelihood (Theorem \ref{theorem_loglikelihood})}
%%%%%%%%%%%%%%%%%%%%%%%%%%%%%%%%%%%%%%%%%%%%%%%%%%%%%%%%%%%%%%%%%%%%%%%%%%%%%%%%%%%%%%%%

\subsection{Proof of Theorem \ref{theorem_loglikelihood}}

\begin{proof}[Proof of Theorem \ref{theorem_loglikelihood}]

We start by describing a set-up in which the returns $s^{\theta}_t = r_t \theta$ are derived within a setting of market predictions. Though not necessary, we do it in a continuous time setting here, as this will technically more elegant and less tedious (e.g. when it comes to the Girsanov Theorem). Note that it also comprises the discrete time case. 

Roughly speaking, the assumption that we need is that the system returns $s^{\theta}$ are obtained by betting a weight $w_t$ equal to a prediction on markets with return $r_t$. We do not need to assume, however, that the predictions are actually observed. 

%assume the system returns $s$ were obtained on investing into $m$ markets and let the returns of the $m$ markets be given by
For that let $p_t$ be a process of cumulative excess returns on $m$ markets given by a solution to
%\footnote{I could also assume a geometric asset price process and then either an infinitesimal small weight for the strategies or correct the Sharpe ratio for the cost of volatility as otherwise the optimal Sharpe will always be with a weight of zero}
% \frac{dp_t}{p_t}
\begin{equation}
 \frac{dp_t}{p_t} = Y_t dt + S_t dW_t %\nonumber
\label{eq_market_dynamics}
\end{equation} with a $m$-dimensional Brownian motion $W$, a deterministic process $\Y=(Y_t)_{t\in \R}$ and $S_t\in \R^{m,m}$ deterministic and invertible.  Here,  $ \Y$ is the (unknown) predictable component of the excess returns. Without loss of generality assume that $S_t= I_m$, the identity matrix. Otherwise we rotate and change the leverage of the markets $p_t$. 

%Without loss of generality (i.e. after a linear transformation/ rotation of the Brownian motion) I assume that $S_t$ is diagonal.  
%\begin{equation}
%S_t=\operatorname{diag}( (\sigma_{kt})_{k=1\ldots m} ) \nonumber
%\end{equation} 
%

Let \begin{equation*}
\X: [0,T]\times \Theta \to \R^m \quad\text{ with }   \X(\theta)\in L^2([0,T]) \quad \forall \theta \in \Theta
\nonumber
\end{equation*} be a parametrized deterministic process $ \X = ( X_t)_{t\geq 0}$, the (parametrized) predictions. Because in this article we restrict ourselves to linear dependencies, we have $ X_t(\theta) =X_t \theta$ with some abuse of notation.  

The assumption is now that the system returns are given by betting with weight $w_t= X_t/p_t$ on the markets $p_t$ (note that these weights maximize the Sharpe ratio if the expected return is $X_t$). Doing so leads to the cash-flow
\begin{eqnarray*}
d s_t^{\theta} &= & w_t^T d p_t \\
&=& X_t(\theta)^T Y_t  dt +  X_t(\theta)^T dW_t.
 \nonumber
\end{eqnarray*} 
We derive $\Sigma(\theta, \theta')$,  the (annualized) quadratic covariation of $s^\theta$ as
\begin{eqnarray}
\Sigma(\theta, \theta')&=& \frac{1}{T} \int_0^T  X_t^{T}(\theta) X_t(\theta') dt \label{equation_sigma}
\end{eqnarray}  

and $\hat \mu$, the (annualized) realized  returns as
\begin{eqnarray}
\hat \mu(\theta) &=& \frac{1}{T} \int_0^T X_t(\theta)^T Y_t dt +  \frac{1}{T} \int_0^T X_t(\theta)^T dW_t \label{equation_mu}\\
&=& \mu(\theta) +\nu(\theta)
\end{eqnarray}
where $\nu(\theta)$ is a random variable with covariance $\frac{1}{T} \Sigma$. This is exactly the setup\footnote{Using the linearity $X_t(\theta)=X_t \theta $ yields with a slight abuse of notation $\Sigma(\theta, \theta')=\theta^T \Sigma \theta$,
$\mu(\theta)=\mu^T \theta $ and $\nu(\theta)=\nu^T \theta$, with $\Sigma =\frac{1}{T} \int_0^T  X_t^{T} X_t dt $, 
$\mu=\frac{1}{T} \int_0^T X_t^T Y_t dt $ and $\nu=\frac{1}{T} \int_0^T X_t^T dW_t$.} of section \ref{section_setup}. That is $\theta$ parametrizes linearly investment strategies with estimated return $\hat \mu$ and covariance $\Sigma$. However, now, a parameter $\theta$ is related to a prediction $X_t(\theta)$ so that we can derive a (log) likelihood.

We are now able to derive the log-likelihood of \textit{prediction} $\theta$. 

By (\ref{eq_market_dynamics}), the dynamics of the market prices are given by (note that without loss of generality $S_t=I_m$)
	\begin{equation*}
		\frac{dp_t}{p_t} = Y_t dt +  dW_t 
		%\label{eq_market_dynamics2} 
		\nonumber
	\end{equation*}

with unknown $Y_t$. Now $\theta$ parametrizes different models $\P^{\theta}$ for $p_t$

	\begin{equation}
		\frac{dp_t}{p_t} = X_t(\theta) dt +  dW^{\theta}_t 
		\label{eq_market_model_theta}
	\end{equation}
	with $W^{\theta}$ a Brownian motion under $\P^{\theta}$.
The question is: What is the (log)likelihood of the realized market price process $p$ under the model $\theta$? The answer is given by the help of the Girsanov Theorem. Due to continuous time, each likelihood (density) is zero. So the only meaningful definition of the log likelihood function is as a relative density with respect to a reference probability. 

A natural reference measure is the probability measure in which there is zero predictability, that is a geometric Brownian motion. For this purpose let
	\begin{equation}
		\frac{dp_t}{p_t} = dW^0_t 
	\end{equation}
be a geometric Brownian motion under the probability distribution $\P^0$. 

Let
$Z_T = T \hat \mu(\theta) - \frac{1}{2} T \Sigma(\theta, \theta)$

Then
\begin{eqnarray*}
Z_T &=& 
\int_0^T X_t(\theta)^T Y_t dt + \int_0^T X_t(\theta)^T  dW^{\theta}_t  \\
&&-\frac{1}{2} \int_0^T X_t(\theta)^T X_t(\theta) d t \\
&=& \int_0^T X_t(\theta)^T  dW^0_t -\frac{1}{2} \int_0^T X_t(\theta)^T X_t(\theta) d t. 
\end{eqnarray*}
Hence by the (multivariate) Girsanov Theorem (see any book on stochastic analysis, e.g. \cite{kallenberg2002foundations} or  \cite{oksendal2003stochastic}), $ \tilde W_t =W^0_t-\operatorname{diag}([Z,W^0]_t)= W^0_t - \int_0^T X_t(\theta) dt$ is a Brownian motion under $\Q^{\theta} =\exp(Z_
{T}) \P^{0}$. In particular under $\Q^{\theta}$:
	\begin{equation*}
		\frac{dp_t}{p_t} =  d W^0_t = d \tilde W_t + X_t(\theta) dt \nonumber
	\end{equation*}
	where is $\tilde W_t$ a Brownian motion. Hence $\Q^{\theta} \stackrel{\D}{=}\P^{\theta}$ is the probability measure we are looking for and $\frac{d\P^{\theta}}{d\P^{0}} =\frac{d\Q^{\theta}}{d\P^{0}} =e^{Z_T} $. Taking logs finishes the proof.
\end{proof}

\subsection{Relation to Siegel and Woodgate: Performance of Portfolios Optimized with Estimation Error}
%\color{red}
There is an interesting though at first sight not obvious connection to \cite{siegel2007performance}. We show how to use their results as estimator for the out-of-sample Sharpe ratio in our setting as described in Section~\ref{section_setup}. We then show that even though the authors prove that their estimators of mean and variance are asymptotically unbiased of order $\frac{1}{T^2}$ the resulting estimator for the Sharpe ratio is biased of order $\frac{1}{T }$, the same order of magnitude as without the adjustment. 

To see this, note first that we can do without the volatility adjustment in \cite{siegel2007performance} as increasing the sampling frequency and annualizing mean, variance and Sharpe ratio lets the variance adjustment converge to zero and the mean adjustment intact.
 
Second, assume that there is one riskless asset so that both $\mu_*$ and $\sigma^2_*$, that is mean and variance of the minimum variance portfolio in the notation of Siegel and Woodgate, equal zero. To be precise a riskless asset would make $\Sigma$ degenerate, but we can think of a sequence in which one asset converges to a riskless asset. 

In the limit (infinite sampling frequency, one riskless asset) we get, for the bias-adjusted estimates of mean and variance, by using formula (8) in \cite{siegel2007performance}: 
\begin{eqnarray}
\hat \mu_{adj} &=& \mu_0 - \frac{n-3}{T} \hat B_{2,2} \mu_0\nonumber\\
\hat \sigma_{adj} &=& \sigma_0\nonumber
\end{eqnarray}
where $\hat B_{2,2}=\frac{1}{\hat \rho^2}$ (as $\mu_*=0$), $\mu_0$ is the target mean of the mean-variance portfolio, $\sigma_0$ its in-sample volatility and $n$ is the number of assets in the Siegel--Woodgate framework. 

Hence, we obtain the estimated out-of-sample Sharpe as
\begin{eqnarray}
\frac{\hat \mu_{adj}}{\hat \sigma_{adj}} &=& \hat\rho - \frac{n-3}{T \hat \rho}
\label{eq_siegel2sric}
\end{eqnarray}
where $\mu_0/\sigma_0=\hat \rho$ is the maximum in-sample Sharpe ratio. Note that due to the riskless asset the efficient frontier is a line and the Sharpe ratio the same for all target returns $\mu_0$ . 

Formula (\ref{eq_siegel2sric}) is similar to SRIC, but with an adjustment of $n-3$ rather than $k=n-2$ which would be the adjustment for an unbiased estimator. To see this, note that $w \in \R^{n}$ has $n-1$ degrees of freedom and we have to account for the additional riskless asset (which does not influence the Sharpe ratio and cannot be counted), bringing the degrees of freedom to $n-2$. Hence the bias correction in \cite{siegel2007performance} lacks one degree of freedom when it comes to the Sharpe ratio.
%\color{black}
	%\input{sharpe_information_tables_figures.tex}

% create bibliography
	\bibliographystyle{apa}
	\bibliography{bibliography}
%\bibliographystyle{unsrtnat}
%\bibliography{bibliography}

% load Stuff
%\section{*******STUFF********}
%\input{to_do_sharpe_information.tex}
\end{document}